\documentclass[journal]{IEEEtran}
\usepackage{cite}
%
\ifCLASSINFOpdf
  \usepackage[pdftex]{graphicx}
\else
\fi

\usepackage{amsmath}
\usepackage{amsthm}
\usepackage{amssymb}

\ifCLASSOPTIONcompsoc
  \usepackage[caption=false,font=normalsize,labelfont=sf,textfont=sf]{subfig}
\else
  \usepackage[caption=false,font=footnotesize]{subfig}
\fi

\usepackage{url}

\hyphenation{op-tical net-works semi-conduc-tor}

\newtheorem{defn}{Definition}
\newtheorem{theorem}{Theorem}
\newtheorem{lemma}{Lemma}
\newtheorem{proposition}{Proposition}

\newtheorem{remark}{Remark}
\newcommand{\bu}{\mathbf{u}}
\newcommand{\bU}{\mathbf{U}}

\newcommand{\NN}{\mathcal{N}}
\newcommand{\fh}[1]{FH-SP$_{#1}$}
\newcommand{\ih}{\ref{eq:i-miqp}}
\usepackage{mathtools}
\usepackage[binary-units=true]{siunitx}
\newcommand{\ceil}[1]{\left\lceil{#1}\right\rceil}
\DeclareMathOperator*{\argmin}{arg\,min}

\usepackage{tikz}
\usepackage{tikzscale}
\usetikzlibrary{shapes,shapes.geometric,arrows,positioning,automata,fit,intersections}
\usetikzlibrary{external}
\tikzexternalize[prefix=figures/tikz/]

\newcommand{\includegtikz}[2][]{ 
	\tikzsetnextfilename{#2}
	\includegraphics[#1]{figures/tikz/#2} }

\usepackage{graphicx} 
\usepackage{multirow}
\usepackage{pgfplots}
\usepgfplotslibrary{fillbetween}
\renewcommand{\parallel}{{\mkern3mu\vphantom{\perp}\vrule depth 0pt\mkern2mu\vrule depth 0pt\mkern3mu}}

\newif\ifapp
\apptrue

\usepackage{todonotes}


\usepackage{cleveref}
\crefname{appsec}{Appendix}{Appendices}
\crefname{theorem}{theorem}{theorems}
\crefname{lemma}{lemma}{lemmas}
\crefname{remark}{remark}{remarks}
\crefname{proposition}{proposition}{propositions}
\crefname{corollary}{corollary}{corollaries}
\crefname{defn}{definition}{definitions}
\begin{document}
\title{An Algorithm for Supervised Driving of Cooperative Semi-Autonomous Vehicles (Extended)}

\author{Florent~Altch\'e,
        Xiangjun~Qian,
        and~Arnaud~de~La~Fortelle,
\thanks{F. Altch\'e, X. Qian and A. de La Fortelle are with MINES ParisTech, PSL Research University, Centre for robotics, 60 Bd St Michel 75006 Paris, France (e-mail: \texttt{florent.altche@mines-paristech.fr}.)}
\thanks{F. Altch\'e is also with \'Ecole des Ponts ParisTech, Cit\'e Descartes, 6-8 Av Blaise Pascal, 77455 Champs-sur-Marne, France.}}

\maketitle

\begin{abstract}
Before reaching full autonomy, vehicles will gradually be equipped with more and more advanced driver assistance systems (ADAS), effectively rendering them semi-autonomous. However, current ADAS technologies seem unable to handle complex traffic situations, notably when dealing with vehicles arriving from the sides, either at intersections or when merging on highways. The high rate of accidents in these settings prove that they constitute difficult driving situations. Moreover, intersections and merging lanes are often the source of important traffic congestion and, sometimes, deadlocks. In this article, we propose a cooperative framework to safely coordinate semi-autonomous vehicles in such settings, removing the risk of collision or deadlocks while remaining compatible with human driving. More specifically, we present a \textit{supervised coordination} scheme that overrides control inputs from human drivers when they would result in an unsafe or blocked situation. To avoid unnecessary intervention and remain compatible with human driving, overriding only occurs when collisions or deadlocks are imminent. In this case, safe overriding controls are chosen while ensuring they deviate minimally from those originally requested by the drivers. Simulation results based on a realistic physics simulator show that our approach is scalable to real-world scenarios, and computations can be performed in real-time on a standard computer for up to a dozen simultaneous vehicles.
\end{abstract}

\begin{IEEEkeywords}
Semi-autonomous driving, safety, supervisor, supervised driving.
\end{IEEEkeywords}

%
\IEEEpeerreviewmaketitle

\section{Introduction}

\IEEEPARstart{A}{dvanced} driver assistance systems (ADAS) are becoming increasingly complex as they spread across the automotive market. Although adaptive cruise control (ACC)~\cite{Vahidi2003} and automated emergency braking (AEB)~\cite{Coelingh2010} are the best-known examples of such systems, applications of ADAS have been broadened and now include pedestrian~\cite{Geronimo2010}, traffic light~\cite{Diaz-Cabrera2015} or obstacle detection~\cite{Kim2015} as well as lane keeping assistance~\cite{Kim2008}. The development of this new equipment allows drivers to delegate part of the driving task to their vehicles. As these systems keep getting more efficient and able to handle more complex situations, vehicles will gradually progress towards \textit{semi-autonomous} driving, where drivers remain in charge of their own safety, while their errors can be seamlessly corrected to prevent potential accidents.

One of the challenges of semi-autonomous driving lies in efficiently handling vehicles on conflicting paths, for instance at an intersection or a highway entry lane. Traffic rules such as priority to the right can help determine whether to pass before or after another vehicle; however, many situations require driving experience to be handled efficiently. Learning-based approaches may eventually prove able to transfer driving experience to a computer, but such knowledge is very hard to implement in a safety system. In this article, we consider another possible solution, consisting in using vehicle-to-vehicle or vehicle-to-infrastructure communication for cooperative semi-autonomous driving. In this setting, vehicles negotiate with one another, or receive instructions from a centralized computer, allowing them to drive safely and efficiently.

In this article, we consider a method to ensure the safety of multiple semi-autonomous vehicles on conflicting paths, for instance crossing an intersection or entering a highway, while remaining compatible with the presence of human drivers. To this end, and inspired by earlier work in~\cite{Colombo2014,Campos2015}, we propose a so-called \textit{Supervisor} which monitors control inputs from each vehicle's driver, and is able to override these controls when they would result in an unsafe situation. More specifically, the role of the supervisor is twofold: first, knowing the current states of the vehicles, the supervisor should determine if the controls requested by the drivers would lead the vehicles into unsafe \textit{inevitable collision states}~\cite{Fraichard2003}. In this case, the second task of the supervisor is to compute safe controls -- maintaining the vehicles in safe states -- which are as close as possible to those actually requested by the drivers. We say that such a control is \textit{minimally deviating}.

This paper provides two main contributions: from a practical standpoint, we design and implement a mathematical framework allowing to simultaneously perform the safety verification of target control inputs, and the computation of minimally deviating safe controls if target inputs are unsafe. From a theoretical standpoint, we formally prove that verifying safety over a finite time horizon is enough to ensure infinite horizon safety, and we provide a sufficient condition on the verification horizon for this property to hold. Unlike previous work focusing on specific situations such as intersections~\cite{Colombo2014,Campos2015}, our framework can be applied to a wide variety of driving scenarios including intersections, merging lanes and roundabouts.

The rest of the article is structured as follows: in \Cref{sec:related-work}, we provide a review of the related literature. In \Cref{sec:problem-statement}, we present our modeling of semi-autonomous vehicles and introduce the \textit{Supervision problem} of verifying the safety of drivers control inputs and finding a minimally deviating safe control if necessary. In \Cref{sec:miqp-supervised}, we present an infinite horizon formulation based on constraints programming to solve this problem. In \Cref{sec:finite-horizon}, we derive a finite horizon formulation which we prove is equivalent to the infinite horizon one. In \Cref{sec:simulation}, we use computer simulations to showcase the performance of the proposed supervisor in various driving situations. In \Cref{sec:discussion}, we present possible methods for real-world implementations of our approach. Finally, \Cref{sec:conclusion} concludes the study.

\section{Related Work}\label{sec:related-work}
In the last decade, a lot of research has been focused on coordinating fully autonomous vehicles in challenging settings such as crossroads, roundabouts or merging lanes, with the ambition of improving both safety and traffic efficiency. Naumann et al.~\cite{naumann1998managing}, followed by Dresner and Stone~\cite{Dresner2004} have seemingly pioneered the work of adapting traffic intersections management methods to fully autonomous vehicles, designing so-called \textit{autonomous intersection management} algorithms. They propose that each approaching autonomous vehicle reserves a time interval to cross the intersection; collisions are prevented by ensuring that conflicting vehicles are assigned non-overlapping crossing times. Subsequent studies on this particular problem have led to other approaches. In~\cite{Makarem2011}, vehicles choose their control inputs based on navigation functions which include a collision avoidance term, allowing vehicles to react to maneuvers from other traffic participants. In~\cite{Qian2015}, collision avoidance is ensured by assigning relative crossing orders to incoming vehicles; each vehicle then uses model predictive control to plan collision-free trajectories respecting these priorities. Other authors have considered different driving situations for autonomous vehicles, such as cooperative merging on a highway~\cite{Mu2015,Ntousakis2016,Mosebach2016}, or entering a roundabout~\cite{Desaraju2009}.

By contrast, relatively little work has considered semi-autonomous driving assistance, possibly because the presence of human drivers brings a lot of additional complexity. The goal of a semi-autonomous driving assistant is to help the driver avoid collisions, either by notifying of a potential danger~\cite{Vasudevan2012} or by taking over vehicle control in dangerous situations~\cite{Gray2013,Gray2013a,Liu2014}. To be accepted by human drivers, such systems should be as unobtrusive as possible, and in particular should only intervene when necessary. Most of the currently existing literature on semi-autonomous driving mostly focuses on highway driving~\cite{Gray2013,Gray2013a,Liu2014}, which presents relatively low difficulty as vehicles trajectories remain mostly parallel. The aim of this article is to bring semi-autonomy one step further, to allow cooperative driving between semi-autonomous vehicles in more complex conflict situations.

Some of these more complex problems have already been studied in the literature. In~\cite{Au2014}, the authors consider \textit{semi-autonomous} driving at an intersection and propose that human drivers let an automated system control their vehicle while crossing said intersection. However, this scheme is rather intrusive as drivers completely relinquish control for a time, and handing back controls to a potentially distracted driver poses problems by itself. Colombo et al.~\cite{Colombo2014,Campos2015} introduced the idea of a supervisory instance (called \textit{supervisor}) tasked with preventing the system of vehicles from entering undesirable states by overriding the controls of one or several vehicles. In this more human-friendly approach, overriding  only occurs when necessary, \textit{i.e.} if an absence of intervention would result in a crash. The question of determining whether overriding is needed or not, called \textit{verification problem}, is NP-hard~\cite{Reveliotis2010}; under several simplifying assumptions, it is shown in~\cite{Colombo2014} to be equivalent to a \textit{scheduling problem}. In this reformulation, vehicles are each assigned a time slot during which they are allowed inside the intersection, and assigned slots are mutually disjoint. If, due to vehicle dynamics, no feasible schedule exists, the initial state is deemed unsafe. This allows the authors to design a so-called \textit{least restrictive} supervisor, which verifies the safety of the desired inputs and overrides them if necessary. However, the proposed supervisor is only suitable to simple intersection geometries with a single conflict point. Moreover, no additional property is required from the safe controls used for overriding, which can widely deviate from the desired ones.

Several variations have been proposed based on the equivalence demonstrated in~\cite{Colombo2014}. Reference~\cite{Bruni2013} designs a supervisor which is robust to bounded uncertainties by adding safety margins. Reference~\cite{Ahn2015} leverages job-shop scheduling to develop a supervisor that considers several possible conflict points inside the intersection; however, vehicle dynamics are only modeled as first-order integrators, which is not realistic in a real-world setting. Campos et al.~\cite{Campos2015} proposed a Pareto-optimal supervisor leading to a minimally deviating formulation by recursively finding the most constrained vehicle, reserving its optimal crossing time, and scheduling the crossing of the remaining vehicles using the previous schedule as constraints. This method allows to minimize the deviation between the overridden and desired controls, but may be computationally intensive. Indeed, one of the major difficulties of performing optimization in this context lies in the necessity to consider all the possible orderings of the vehicles. 

This problem is highly combinatorial; it has been shown that there exists up to $2^{n(n-1)/2}$ orderings for $n$ vehicles~\cite{Gregoire2014}. Moreover, it is generally ignored by most authors studying motion planning problems, who either use simple heuristics such as first-come, first-served~\cite{Dresner2004,Kim2014} or rely on exhaustive search~\cite{Simeon2002,Campos2015}. A possible method to handle the combinatorial explosion is to use pruning techniques such as \textit{branch-and-bound}, which avoid exploring branches of the decision tree that would provably yield suboptimal results. These methods are commonly used in mixed-integer linear (see, \textit{e.g.}, \cite{Peng2005,Muller2016} for applications to motion planning) or quadratic programming (see, \textit{e.g.}, \cite{Park2015}) problems, which combine continuous and discrete optimization. More general nonlinear methods have also been used in motion planning~\cite{Borrelli2006,Cafieri2014}, although their high computational difficulty generally requires linearization for effective resolution, as illustrated in~\cite{Murgovski2015a}. To the best of the authors' knowledge, branch-and-bound methods have never been applied to semi-autonomous driving.

This article significantly differs from references~\cite{Colombo2014,Campos2015,Ahn2015}. Instead of using a scheduling approach, we formulate the supervision problem as a Mixed Integer Quadratic Programming (MIQP) problem, which can handle various geometries with multiple collision points such as multi-lane intersections, merging lanes or roundabouts. Our formulation only requires to consider a small, finite planning horizon, while previous approaches~\cite{Colombo2014,Campos2015,Ahn2015} needed to schedule the crossing of all the considered vehicles. Furthermore, the MIQP formulation is highly flexible, allowing to take into account various constraints (\textit{e.g.}, maximal turning speed) and different cost functions. Finally, the resolution of MIQP can leverage highly-optimized solvers~\cite{gurobi}, allowing real-time implementations even for a relatively large number of vehicles.

This article expands the results presented in the conference paper~\cite{Altche2016b}; among the significant improvements made in this extension, we now give a more comprehensive model of our vision of semi-autonomous vehicles and adjust the modeling of the problem to handle bounded control errors. We provide a detailed discussion on how complex road geometries with multiply-intersecting paths can be handled, leading to a very versatile framework. Finally, we extend the theoretical results to continuous arrivals of vehicles, and provide possible ways for actual implementation as a roadside unit.

\section{Supervision problem}\label{sec:problem-statement}
We consider the problem of safely coordinating multiple semi-autonomous vehicles on the road, in order to prevent collisions and deadlock situations where no vehicle is able to move forward. Since vehicles are human-driven, a form of outside supervision is necessary to prevent undesirable situations. This section presents our formulation of a so-called Supervision problem generalizing the work of Colombo et al.~\cite{Colombo2014}; solving this problem yields a provably safe control, as close as possible to the original intentions of the drivers.

\subsection{Modeling}
\subsubsection{Supervision area}
We consider an isolated portion of a road infrastructure used by semi-autonomous vehicles, where some form of coordination is required to ensure vehicles safety. For instance, this could be a classical road intersection, a roundabout or an entry or weaving lane on a highway. We call this bounded portion of infrastructure the \textit{supervision area} and we assume that vehicles can travel safely outside of the collision area using only their ACC capacities. In a real-world setting, different critical portions of infrastructure which are far enough apart can be considered individually, but need to be treated jointly if traffic from one can influence another. \Cref{fig:supervision-area} shows examples of roads configurations and the corresponding possible choice for a supervision area. 

\begin{figure}
\centering
\begin{minipage}{0.5\columnwidth}
\subfloat[Crossroads]{\includegtikz[width=0.8\linewidth]{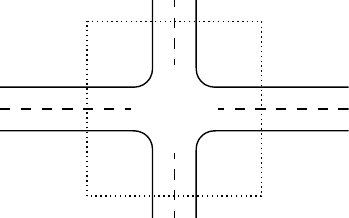}} \\
\subfloat[Roundabout]{\includegtikz[width=0.8\linewidth]{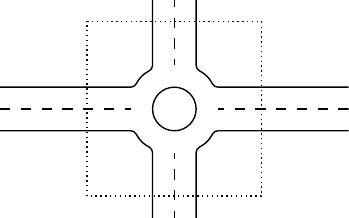}}
\end{minipage} 
\begin{minipage}{0.3\columnwidth}
\subfloat[Highway merging]{\includegtikz[width=\linewidth]{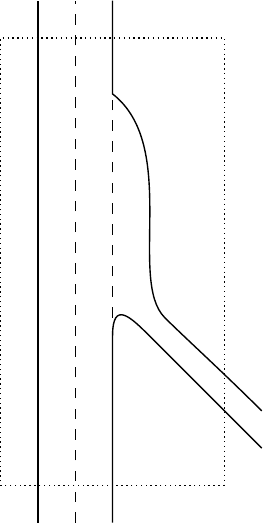}}
\end{minipage}
\caption{Examples of considered road configurations, and corresponding supervision areas (interior of the dotted rectangles).\label{fig:supervision-area}}
\end{figure}

In this article, we present an embodiment of a Supervisor working over a spatially static supervision area over time, that can be thought of as a dedicated computer on the infrastructure or in the cloud. Vehicles are assumed to establish a connection to the supervisor when they enter the supervision area (using, for instance, V2I communication), and maintain it until they exit this region. We denote by $\NN_t$ the set of vehicles currently inside the supervision area at a time $t$.

\subsubsection{Semi-autonomous vehicles}
We consider \textit{semi-autonomous} vehicles equipped with advanced driver assistance systems, many of which are already commercially available, and Vehicle to Infrastructure (V2I) communication capacities. In particular, vehicles are assumed to have advanced cruise control, automated braking and lane keeping assistance systems such that accelerating, braking and steering can be actuated by an on-board computer. Moreover, we suppose that vehicles have access to reliable cartographic data and are capable of precisely measuring their current position, orientation and velocity with reference to a unique global frame, for instance using GNSS and inertial navigation.

Since the vehicles are not assumed to have advanced environment-sensing capacities, for instance based on LIDAR data, they are not able to handle all situations and still require a human driver to safely navigate, for instance in the case of on-road obstacles or loss of GNSS signal. Moreover, lateral collisions or deadlock situations can happen due to human error, justifying the need for supervision.

\subsubsection{Parametrization}
In the remainder of this article, we only consider the two-dimensional kinematics and dynamics of the vehicles. We denote by $\boldmath E_i$ a bounding polygon for the shape of vehicle $i \in \NN_t$, and by $c_i$ the center of $\boldmath E_i$.

We assume that the geometry and lane markings of the roads inside the supervision area define a finite number of reference paths across this region, as exemplified in~\cref{fig:collision-region}. Due to the presence of a lane keeping assistance system, we assume that every vehicle is able to follow one of these reference paths with a small bounded lateral error. Noting $\gamma_i$ the reference path of a vehicle $i$, we assume that the distance of $c_i$ from $\gamma_i$ is bounded from above by $\xi_i \geq 0$. Moreover, we assume that $\gamma_i$ is at least $\mathcal C^2$-continuous, and that $\xi_i$ is small enough to ensure, for all $x \in \mathbb R^2$,
\begin{equation}
d(x,\gamma_i) \leq \xi_i \Rightarrow \exists !\ y \in \gamma_i, \ ||x-y|| = d(x, \gamma_i).
\end{equation}

This condition allows to use the curvilinear position of the point of $\gamma_i$ closest to $c_i$ to uniquely encode the longitudinal position of vehicle $i$ along $\gamma_i$. We denote by $s_i$ this curvilinear position, with the convention that $s_i = 0$ when the front bumper of $i$ first enters the supervision area and increases when $i$ goes forward; we let $s_i^{out}$ be the longitudinal position at which the rear bumper of $i$ fully exits the supervision area. \label{sec:coll-region}

\subsubsection{Vehicle dynamics}\label{sec:vehicle-dynamics}
In this article, we mostly focus on the longitudinal dynamics of the vehicles, and we let $x_i = \left( s_i, v_i \right)^T$ be the state of vehicle $i$, where $s_i$ and $v_i$ are respectively its longitudinal position and longitudinal speed. We assume that vehicles follow second-order integrator dynamics with a bounded longitudinal error, and that the control input $u_i$ corresponds to the longitudinal acceleration as:
\begin{equation}
	\dot x_i = A x_i + B u_i,
\end{equation}
where $A = \left( \begin{smallmatrix}0 & 1 \\ 0 & 0 \end{smallmatrix}\right)$ and $B = \left(\begin{smallmatrix}0 \\ 1\end{smallmatrix}\right)$. Since we mostly consider situations with conflicting vehicles, we assume that human drivers maintain a relatively low speed (compared to the curvature of their path), which allows neglecting lateral dynamics and slip~\cite{Polack2017}.
 
To account for speed limitations on the vehicles, each vehicle $i$ is supposed to have a bounded non-negative velocity, so that $v_i \in [0, \overline v_i]$ (with $\overline v_i > 0$) at all times. Moreover, we assume that the acceleration $u_i$ of each vehicle is bounded as $u_i \in [\underline u_i, \overline u_i]$, with $\underline u_i < 0 < \overline u_i$. These bounds can differ between vehicles, thus allowing heterogeneous vehicle performance. At a given time $t$, we let $\bU_t = \prod_{i \in \NN_t} [\underline u_i, \overline u_i]$ be the set of admissible accelerations for the vehicles of $\NN_t$. We denote bt boldface $\mathbf x = (x_i)_{i \in \NN_t}$ and $\bu = (u_i)_{i \in \NN_t}$ the state and control for the system of vehicles.

In what follows, we let $v_{max} > 0$ be a global upper bound for $\overline{v}_i$, $u_a > 0$ a lower bound for $\overline u_i$ and $u_b < 0$ an upper bound for $\underline{u}_i$ such that for all $t \geq t_\kappa$ and all $i \in \NN_t$, $\overline v_i \leq v_{max}$ and $\underline u_i \leq u_{b} < 0 < u_a \leq \overline u_i$. Therefore, all vehicles are capable of braking with $u_b$ and accelerating with $u_a$; finally, we let $u_{max}$ be a global upper bound for $\overline u_i$. 

\subsubsection{Collision regions}
Finally, we assume that the angle between the orientation of vehicle $i$ and the tangent to $\gamma_i$ at its point closest to $c_i$ is also bounded. With these hypotheses, for any pair of vehicles $(i,j)$, we can compute the bounded set $\mathcal C_{ij}$ of curvilinear positions $(s_i, s_j) \subset [0,s_i^{out}] \times [0,s_j^{out}]$ for which a collision could happen between $i$ and $j$. Note that these sets are ``inflated'' to take into account the bounded control errors. We call $\mathcal C_{ij}$ the \textit{collision region} between $i$ and $j$; \cref{fig:collision-region} shows examples of paths and corresponding computed collision regions for different driving situations. Note that collision regions can be empty or have one or multiple connected components. If $\mathcal C_{ij} \neq \emptyset$, we say that vehicles $i$ and $j$ are \textit{conflicting}; when $\mathcal C_{ij}$ has multiple connected components, we denote by $\mathcal C^p_{ij}$ its $p$-th component, using the convention  $\mathcal C^p_{ij} = \mathcal C^p_{ji}$.

\begin{figure}
\subfloat[Simple orthogonal intersection situation with example vehicle shapes\label{fig:collision-region-a}]{
\includegtikz[width=0.45\linewidth,height=3cm]{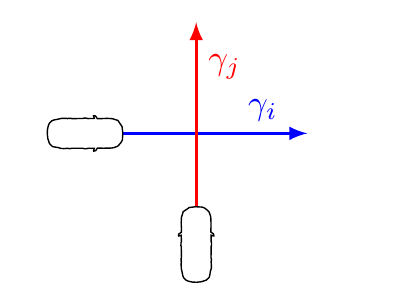}
\includegtikz[width=0.45\linewidth,height=3cm]{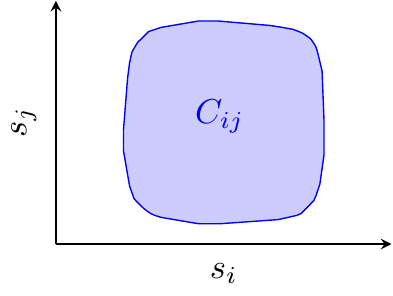}}

\subfloat[Roundabout situation with multiple connected components in $\mathcal C_{ij}$\label{fig:collision-region-b}]{
\includegtikz[width=0.45\linewidth,height=3cm]{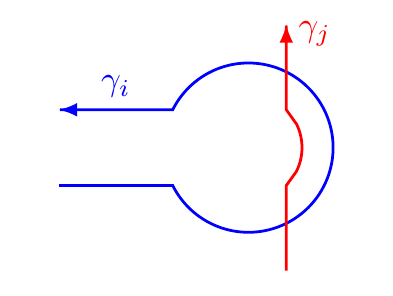}
\includegtikz[width=0.45\linewidth,height=3cm]{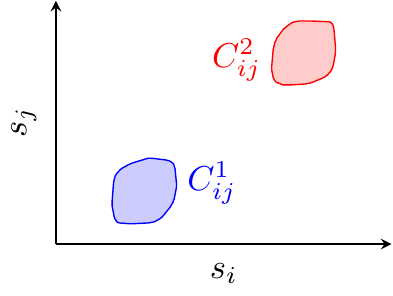}}

\subfloat[Highway merging situation\label{fig:collision-region-c}]{
\includegtikz[width=0.45\linewidth,height=3cm]{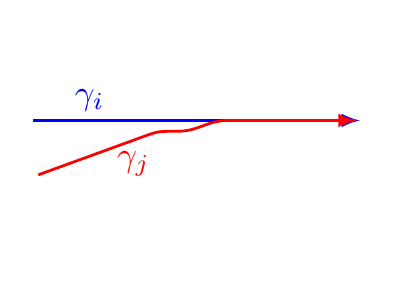}
\includegtikz[width=0.45\linewidth,height=3cm]{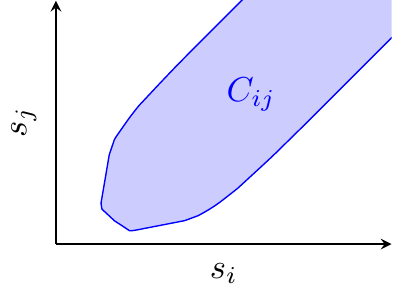}}
\caption{Examples of paths (left) and corresponding collision regions (right) for vehicles with the polygonal shape shown in~\cref{fig:collision-region-a}. \label{fig:collision-region}}
\end{figure}

\subsubsection{No-stop regions}
To prevent creating deadlock situations, vehicles are not allowed to stop when doing so would block traffic in other directions. To this extent, we define a \textit{no-stop region} (see~\cref{fig:nostop-accel-region}) $\mathcal D_i$ for each vehicle $i \in \NN_t$ as the smallest interval $\mathcal D_i = [\underline s_i^\perp, \overline s_i^\perp]$ containing all $\min \left( \Pi_{s_i} \mathcal C^p_{ij} \right)$ for all $t' \geq t$, $j \in \NN_{t'}$ and all $p$ such that $(0,0) \notin \mathcal C^p_{ij}$; in this formula, $\Pi_{s_i}$ is the projection operator on the first coordinate. The no-stop region corresponds to the part of the supervision area where a vehicle may have to yield to another; if $\mathcal C^p_{ij}$ contains $(0,0)$, then either $i$ or $j$ enters the supervision area behind the other, in which case the relative ordering of the vehicles is given and the $\mathcal C^p_{ij}$ does not count in $\mathcal D_i$.

Note that, although this definition theoretically requires knowledge of all future vehicles, $\mathcal D_i$ can be computed off-line as a finite intersection of intervals provided that there only exists a finite number of possible paths inside the supervision area. In what follows, we let $v_{min} > 0$ be a minimum allowed speed for any vehicle inside its no-stop region, and we assume that $v_{min} \leq \overline v_i$ for all vehicles.

For a no-stop region $\mathcal D_i$, we define the corresponding acceleration region $\mathcal A_i = [s_i^{acc}, \underline s_i^\perp]$ such that, if vehicle $i$ is stopped at $s_i^{acc}$, it can reach a speed $v_{min}$ before reaching $\underline s_i^\perp$. More specifically, we require that $0 \leq s_i^{acc} \leq \underline s_i^\perp - \frac{{v_{min}}^2}{2 u_a}$ for all $i$. Inside the acceleration region, vehicles are only allowed to accelerate; this condition prevents vehicles from stopping right before the entrance of the no-stop region, leaving them unable to proceed forward due to the minimum speed requirement. \Cref{fig:nostop-accel-region} illustrates an example of the no-stop regions and the corresponding acceleration regions. \label{sec:nostop-region}

\begin{figure}
\includegtikz[width=\columnwidth,height=4cm]{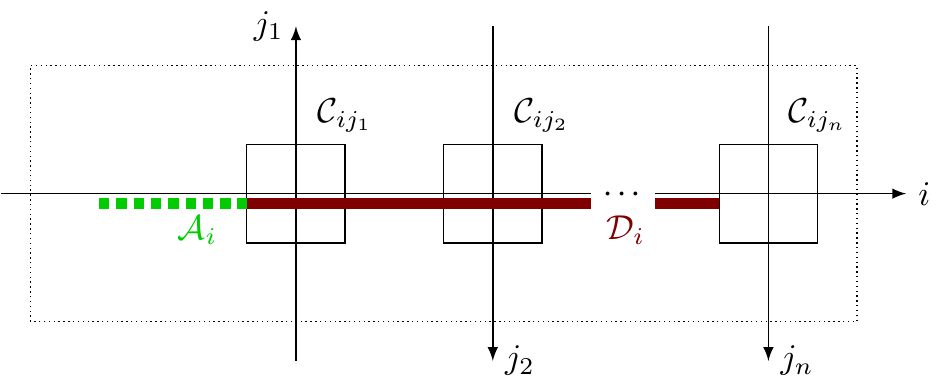}
\caption{Illustration of the no-stop region $\mathcal D_i$ and acceleration region $\mathcal A_i$ inside the supervision area (dotted rectangle).\label{fig:nostop-accel-region}}
\end{figure}

\subsubsection{Time discretization}
Drivers continuously change the control input of their vehicle; however, due to computational and communication constraints, it is impractical to handle functions of a continuous variable. In the remainder of this article, we choose a constant time step duration $\tau > 0$, and we assume that all vehicles use piecewise-constant controls with step $\tau$, typically \SI{0.5}{\second}. To simplify the formulation, we further assume that vehicles update their control simultaneously at times $t_\kappa = \kappa \tau$ for $\kappa \in \mathbb N$, and we denote by $\mathcal U_\tau(t_\kappa)$ the set of piecewise-constant admissible controls for the vehicles of $\NN_{t_\kappa}$. By definition, for all $t \geq t_\kappa$ and all $\bu \in \mathcal U_\tau(t_\kappa)$, $\bu(t) \in \bU_{t_\kappa}$.

\subsection{Problem statement}
Before presenting the so-called \textit{supervision problem}, we first define the safety criterion for the vehicles inside the supervision area at a given time.
\begin{defn}[Safe state]\label{def:safe-state}
	We say that the supervision area is in a safe state $\mathbf x^\kappa$ at time $t_\kappa$ if there exists an admissible piecewise-constant control $\bu \in \mathcal U_\tau(t_\kappa)$ defined over $[t_\kappa, +\infty[$ such that, under this control and starting from $\mathbf x^\kappa$, for all $t \geq t_\kappa$ and all $i, j \in \NN_{t_\kappa}$, $(s_i(t), s_j(t)) \notin \mathcal C_{ij}$. Such a control is said to be a \textit{safe control}.
\end{defn}

With this definition, the supervision area is in a safe state when all the vehicles inside this area can apply a dynamically admissible, infinite horizon control without a risk of collision. This safety condition corresponds to a contraposition of the notion of ``{inevitable collision state}'' proposed by Fraichard et al.~\cite{Fraichard2003}. In what follows, we denote by $\mathcal U^{safe}_\tau(t_\kappa)$ the set of safe and dynamically admissible piecewise-constant controls for the vehicles in $\NN_{t_\kappa}$; by definition, a control $\bu \in \mathcal U^{safe}_\tau(t_\kappa)$ is a piecewise-constant function from $[t_\kappa,+\infty[$ to $\bU_{t_\kappa}$. We now define the safety condition for vehicles entering the supervision area.
\begin{defn}[Safe entry]\label{def:safe-entry}
	Consider a safe state $\mathbf x^\kappa$ at time $t_\kappa$ and let $t_1 > t_\kappa$ be the first time at which a new vehicle enters the supervision area. We say that the vehicles of $\NN_{t_1} \setminus \NN_{t_\kappa}$ safely enter the supervision area with a margin $\tau$ if $t_1 \geq t_\kappa + \tau$, or if any safe control $\bu \in \mathcal U^{safe}_\tau(t_\kappa)$:
	\begin{itemize}
		\item keeps the system of the vehicles of $\NN_{t_1}$ safe at time $t_\kappa + \tau$ and
		\item remains safe over $[t_\kappa, +\infty[$ for the vehicles of $\NN_{t_\kappa}$,
	\end{itemize} regardless of the control applied by the vehicles of $\NN_{t_1} \setminus \NN_{t_\kappa}$ over $[t_1, t_1+\tau]$.
\end{defn}

This definition ensures that a safe control computed for the vehicles of $\NN_{t_\kappa}$ remains safe after new vehicles enter, \textit{i.e.} the entry of new vehicles does not invalidate previously safe controls. Moreover, we assume that we can safely exclude vehicles departing the supervision area from the safety verification problem, \textit{i.e.} that drivers are able to safely follow the previously departed vehicles without supervision. We will show in \Cref{sec:infinite-rh} that these hypotheses allow discrete-time supervision with continuous vehicle arrival.

In the remainder of this article, we consider a centralized supervisor working in discrete time steps of duration $\tau$, and we assume that new vehicles always enter safely with a margin $\tau$. At the beginning of each time step $\kappa$, the supervisor receives an information about the desired longitudinal control of each vehicle for the next time step, denoted by $u_{i,des}^\kappa$. The collection of these desired controls for the vehicles of $\NN_{t_\kappa}$ defines a constant desired system control $\bu_{des}^\kappa$ defined over $[t_\kappa, t_\kappa + \tau [$.

This control may, or may not, lead the system of vehicles into an unsafe state. The supervisor is tasked with preventing the system from entering an unsafe state, by overriding the desired control if necessary. To remain compatible with human drivers, it is desirable that the supervisor has several properties, namely being \textit{least restrictive} and \textit{minimally deviating}. Letting $\mathcal U^{safe}_{\tau,\kappa}(t_\kappa)$ be the restriction of the functions of $\mathcal U^{safe}_{\tau}(t_\kappa)$ to $[t_\kappa, t_\kappa+\tau[$, we define the \textit{least restrictive} supervision problem:
\begin{defn}[Least restrictive supervision]
	Consider a safe state $\mathbf x^\kappa$ at time $t_\kappa = \kappa \tau$, a desired system control $\bu_{des}^\kappa$ and assume that all new vehicles enter the supervision area safely with a margin $\tau$. The least restrictive \textit{supervision problem} ($SP$) is that of finding a control $\bu_{safe}^\kappa \in \mathcal U^{safe}_{\tau,\kappa}(t_\kappa)$ such that 	$\bu_{safe}^\kappa = \bu_{des}^\kappa$ if $\bu_{des}^\kappa \in \mathcal U^{safe}_{\tau,\kappa}(t_\kappa)$.
\end{defn} 

Note that this definition corresponds to that of~\cite{Colombo2014} in our generalized setting. Such a supervisor is \textit{least restrictive} because overriding only occurs if the initially requested control would lead the vehicles in an unsafe state. However, it is also desirable that the control used for overriding is chosen close to the drivers' desired control. Extending the work in~\cite{Campos2015}, we define the \textit{minimally deviating} supervision problem as follows:
\begin{defn}[Minimally deviating supervision]\label{def:minimally-deviating}
	Consider a safe state $\mathbf x^\kappa$ at time $t_\kappa = \kappa \tau$, a desired system control $\bu_{des}^\kappa$ and assume that all new vehicles enter the supervision area safely with a margin $\tau$. The minimally deviating \textit{supervision problem} ($SP^*$) is that of finding a constant control ${\bu^*}_{safe}^\kappa$ such that:
	\begin{align}
	\mathbf {\bu^*}_{safe}^\kappa = \argmin_{\bu\, \in\, \mathcal U^{safe}_{\tau,\kappa}(t_\kappa)} ||\bu^\kappa - \bu_{des}^\kappa||
	\end{align}
	where $||\cdot||$ is a norm defined over $\bU_{t_\kappa}$.
\end{defn} 
Note that, from this definition, any solution to $SP^*$ is a solution to $SP$.

This concept of minimally deviating supervision follows a different fail-safety paradigm that could be found in, \textit{e.g.}, rail transportation where all trains in an area should perform an emergency braking when an incident occurs. The reasoning behind \cref{def:minimally-deviating} is that, to improve efficiency without sacrificing safety, intervention is only performed on vehicles which are actually at risk, and does not necessarily result in a full stop. However, at individual vehicle level, the safe overriding control ${u^*}_{safe}^\kappa$ may differ greatly from the driver's input, \textit{e.g.} braking instead of accelerating.

\section{Infinite Horizon Formulation of the Supervision Problem}\label{sec:miqp-supervised}
In this section, we present an extension of the work in~\cite{Altche2016b} allowing to reformulate the generalized minimally deviating supervision problem using mixed-integer quadratic programming (MIQP) in~\Cref{sec:miqp}. As the supervisor works in discrete time steps of duration $\tau$, we consider the beginning of a step $\kappa$, corresponding to a time $t_\kappa = \kappa \tau$ and formulate an infinite-horizon MIQP problem. Assuming the initial state is safe, we will show in \Cref{sec:miqp-obj} that this formulation can be used to find a minimally deviating safe control for the vehicles in $\NN_{t_\kappa}$. We will show in \Cref{sec:infinite-rh} that, if the vehicles of $\NN_{t_\kappa}$ follow the corresponding control, our formulation expressed at $t_{\kappa+1} = (\kappa+1)\tau$ remains feasible for the vehicles of $\NN_{t_{\kappa+1}}$, provided that all new vehicles enter safely with a margin $\tau$. These properties ensure that our infinite horizon MIQP formulation can be solved in a receding horizon fashion, to ensure safety for all future vehicles.

\subsection{Model variables and constraints}\label{sec:miqp}
In what follows, we present the variables and constraints used in our model. Unless specified otherwise, these constraints are enforced at all time steps $k \geq \kappa$, and for all vehicles of $\NN_{t_\kappa}$.

\subsubsection{Vehicle dynamics}
When they evolve inside the supervision area, vehicles use a piecewise-constant control, which is updated every $\tau$ seconds. For a vehicle $i \in \NN_{t_\kappa}$ at step $k$, we introduce the variables $s_i^k$, $v_i^k \in [0, \overline v_i]$ and $u_i^k \in [\underline u_i, \overline u_i ]$, respectively denoting its curvilinear position and longitudinal speed at $t_k$, and longitudinal acceleration over $[t_k, t_k+\tau[$. The following constraints enforce vehicle dynamics:
\begin{align}
	s_i^{k+1} - s_i^{k} = &\ \frac 1 2 \left( v_i^k + v_i^{k+1} \right)\tau \label{eq:cstr-first} \\
	v_i^{k+1} - v_i^{k} = &\ u_i^k\tau \label{eq:cstr-dyn}
\end{align}

\subsubsection{Logical constraints}
In~\cite{Altche2016}, we showed that it is possible to enforce logical constraints on continuous and integer variables with linear inequalities using a ``big-M'' formulation. More specifically, if $b$ is a binary variable and $x$ a continuous or integer variable bounded so that $|x| < M$, then the logical constraint: $(b = 0 \Rightarrow x \leq a)$ is equivalent to the linear inequality constraint $x \leq a + bM$. This method can be used to define indicator binary variables for a given semi-infinite interval: for a continuous variable $x$ and a constant $a \in \mathbb R$, we denote by $b = \chi_{[a,+\infty[}(x)$ the constraints $(b = 0 \Rightarrow x \leq a) \wedge (b = 1 \Rightarrow x \geq a)$, where $\wedge$ denotes the binary conjunction; we use $\neg$ to denote the binary negation.

\subsubsection{Collision avoidance}
As presented in \Cref{sec:coll-region}, the \textit{collision region} between two vehicles $i, j \in \NN_{t_\kappa}$, $\mathcal C_{ij}$, can be computed off-line. As it was already presented in~\cite{Altche2016}, it is possible to compute a minimal bounding convex polygon for each connected component $\mathcal C^p_{ij}$ of $\mathcal C_{ij}$. A good compromise between accuracy and complexity is to use a bounding hexagon with edges either parallel to the $s_i = 0$, $s_i = s_j$ or $s_j = 0$ lines; such a polygon is uniquely defined by six parameters, as shown in \cref{fig:bounding-hexagon}. 

\begin{figure}\centering
	\includegtikz[width=0.55\linewidth,height=.4\linewidth]{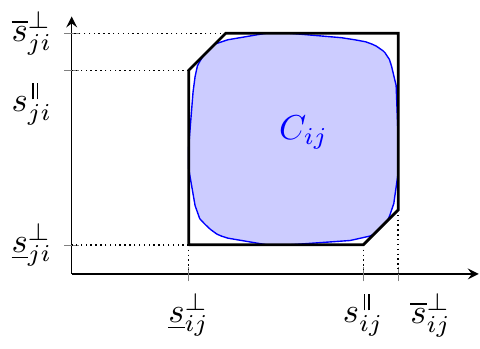}
	\caption{Minimum bounding hexagon for the collision region presented in \cref{fig:collision-region-a}.\label{fig:bounding-hexagon}}
\end{figure}

To ensure that vehicles do not enter any of the collision regions, we introduce a set of binary variables to encode the discrete decisions arising from the choice of an ordering of vehicles, as presented in~\cite{Altche2016}. For all conflicting vehicles $i,j \in \NN_{t_\kappa}$, we let $\pi_{ij}^p = 1$ if vehicle $i$ passes the $p$-th collision region before $j$, and $\pi_{ij}^p = 0$ otherwise; moreover, we introduce the binary indicator variables for all $k \geq \kappa$:
\begin{align}
\varepsilon_{ij,p}^\parallel(k) = &\ \chi_{[s_{ij,p}^\parallel,+\infty[}(s_i^k), \\
\varepsilon_{ij,p}^\perp(k) = &\ \chi_{[\overline s_{ij,p}^\perp,+\infty[}(s_i^k).
\end{align}

We enforce the collision avoidance constraints for all conflicting vehicles $i,j \in \NN_{t_\kappa}$ and $k \geq \kappa$ as:
\begin{align}
	\left( \pi_{ij}^p \wedge \neg\, \varepsilon_{ij,p}^\parallel(k) \right) & \Rightarrow s_j^{k+1} \leq \underline s_{ji,p}^\perp \label{eq:safety-1} \\
	\left( \pi_{ij}^p \wedge \varepsilon_{ij,p}^\parallel(k) \wedge \neg\, \varepsilon_{ij,p}^\perp(k) \right) &  \Rightarrow  s_i^{k+1}  \geq s_j^{k+1} + d_{ij,p} \label{eq:safety-2} \\
	\left( \pi_{ij}^p \wedge \varepsilon_{ij,p}^\parallel(k) \wedge \neg\, \varepsilon_{ij,p}^\perp(k) \right) & \Rightarrow \nonumber \\
	 s_i^{k+1} \geq s_j^{k+1} + & d_{ij,p}  + \frac \tau 2 \left(v_j^{k+1} - v_i^{k+1} \right) \label{eq:safety-3}
\end{align}
where $d_{ij,p} = s_{ij,p}^\parallel - \underline s_{ji,p}^\perp$. Constraint~\eqref{eq:safety-1} corresponds to ``crossing situations'', where a vehicle has to wait for another to pass; constraints~\eqref{eq:safety-2} and~\eqref{eq:safety-3} correspond to ``following situations'', where a vehicle needs to maintain a certain longitudinal distance from another.

Note that constraints \eqref{eq:safety-1} to \eqref{eq:safety-3} use the values of the indicator variables at step $k$ to force the positions of the vehicles at step $k+1$ in order to avoid a ``corner cutting'' phenomenon; the additional constraint \eqref{eq:safety-3} prevents collisions between two time steps. These constraints are very slightly stronger than that of collision avoidance, \textit{i.e.} for all $t\geq t_\kappa$, $(s_i(t), s_j(t)) \notin \mathcal C_{ij}$. Consequently, the results in the rest of this article are to be understood replacing the exact collision avoidance constraints in \cref{def:safe-state} by conditions \eqref{eq:safety-1}-\eqref{eq:safety-3}.

Finally, to ensure the consistency of the formulation, we add the mutual exclusion constraint for all conflicting vehicles $i, j \in \NN_{t_\kappa}$:
\begin{equation}
	\pi_{ij}^p + \pi_{ji}^p = 1.
\end{equation}

\subsubsection{Deadlock avoidance}
As described in \Cref{sec:nostop-region}, we require all vehicles to maintain a minimum speed inside their no-stop region $\mathcal D_i = [\underline s_i^\perp, \overline s_i^\perp]$. This requirement is enforced by defining  additional binary variables, for all $i \in \NN_{t_\kappa}$ and all $k \geq \kappa$:
\begin{align}
	\zeta_i^{acc}(k) =& \chi_{[\underline s_i^{acc}, +\infty[} (s_i^k) \\
	\zeta_i^{in}(k) =& \chi_{[\underline s_i^\perp, +\infty[} (s_i^k) \\
	\zeta_i^{out}(k) =& \chi_{[\overline s_i^\perp, +\infty[} (s_i^k) \\
	\eta_i(k) =& \chi_{[v_{min} - u_a \tau, +\infty[} (v_i^k)
\end{align}
and using the constraints:
\begin{align}
	\left( \zeta_i^{acc}(k) \wedge \neg\, \zeta_i^{in}(k) \wedge \neg\, \eta_i(k)  \right) \Rightarrow&\ v_i^{k+1} \geq v_i^k + u_a \tau,\label{eq:cstr-accel} \\
	\left( \zeta_i^{in}(k) \wedge \neg\, \zeta_i^{out}(k) \right) \Rightarrow&\ v_i^k \geq v_{min}.\label{eq:cstr-deadlock}
\end{align}
As long as the acceleration regions $\mathcal A_i$ are large enough, constraint~\eqref{eq:cstr-accel} prevents vehicles from remaining blocked due to the minimum speed requirement~\eqref{eq:cstr-deadlock}. We will show in the next section that these conditions effectively prevent deadlocks for all future times.

\subsubsection{Initial conditions}
The supervision problem is used in a receding horizon fashion, and we consider that the state of each vehicle of $\NN_{t_\kappa}$ at time $t_\kappa$ is known before solving the problem. Therefore, we use the following initial condition for all $i \in \NN_{t_\kappa}$:
\begin{equation}
	\left( s_i^\kappa, v_i^\kappa \right) = \left( s_i(t_\kappa), v_i(t_\kappa) \right)\label{eq:cstr-last} 
\end{equation}

\subsection{Objective function}\label{sec:miqp-obj}
Any piecewise-constant control verifying constraints \eqref{eq:cstr-first} to \eqref{eq:cstr-last} for all $k \geq \kappa$ is dynamically admissible and prevents collisions for all future times, and is therefore in $\mathcal U^{safe}_{\tau,\kappa}(t_\kappa)$. To remain compatible with human driving, we now formulate an objective function allowing to find a least restrictive and minimally deviating control given a desired control $\bu_{des}^\kappa = (u_{i,des}^\kappa)_{i\in \NN_{t_\kappa}}$. In what follows, we let $(w_i^\kappa)_{i \in \NN_{t_\kappa}}$ be a set of strictly positive weights, $\mathbf X$ be the tuple of all the problem variables, and we define:
\begin{equation}
	J^\kappa(\mathbf X) = \sum_{i \in \NN_{t_\kappa}} w_i^\kappa \left( u_i^\kappa - u_{i,des}^\kappa \right)^2.
\end{equation}
Noting $\pi_{\bu^\kappa}$ the projection operator such that $\pi_{\bu^\kappa}(\mathbf X) = \bu^\kappa$, we deduce the following theorem:

\begin{theorem}\label{thm:ih-miqp}
The solution of the optimization problem:
\begin{align}
	\bu^* =&\ \pi_{\bu^\kappa} \argmin_{\mathbf X} J^\kappa(\mathbf X) \label{eq:i-miqp}\tag{IH-SP} \\
	\text{subj. to }&\ \forall k \geq \kappa,\ \eqref{eq:cstr-first}-\eqref{eq:cstr-last} \nonumber
\end{align}
is a solution to the minimally deviating supervision problem $SP^*$ at time $t_\kappa$, for the norm associated with ${J^\kappa \circ \pi_{\bu^\kappa}}$.
\end{theorem}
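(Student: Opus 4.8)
The plan is to show that problem~\eqref{eq:i-miqp} is merely a reparametrization of $SP^*$: once the decision variables $\mathbf X$ are projected by $\pi_{\bu^\kappa}$, the feasible set of~\eqref{eq:i-miqp} coincides exactly with $\mathcal U^{safe}_{\tau,\kappa}(t_\kappa)$, and the objective $J^\kappa$ is the square of the norm under consideration. Accordingly, the argument has two ingredients: an equivalence of feasible sets, and an identification of cost functions. Combining them, minimizing $J^\kappa$ and projecting will be literally the same operation as solving $SP^*$ from \Cref{def:minimally-deviating}.

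First I would prove that $\pi_{\bu^\kappa}$ maps the feasible set of~\eqref{eq:i-miqp} \emph{onto} $\mathcal U^{safe}_{\tau,\kappa}(t_\kappa)$. The ``easy'' inclusion is already granted at the start of \Cref{sec:miqp-obj}: any $\mathbf X$ satisfying \eqref{eq:cstr-first}--\eqref{eq:cstr-last} for all $k \geq \kappa$ encodes, via \eqref{eq:cstr-dyn}, an admissible piecewise-constant control on $[t_\kappa,+\infty[$ whose trajectory avoids every collision region in the strengthened sense \eqref{eq:safety-1}--\eqref{eq:safety-3} and never deadlocks, hence lies in $\mathcal U^{safe}_\tau(t_\kappa)$; its restriction to $[t_\kappa,t_\kappa+\tau[$, which is exactly $\bu^\kappa=\pi_{\bu^\kappa}(\mathbf X)$ since the control is constant there, belongs to $\mathcal U^{safe}_{\tau,\kappa}(t_\kappa)$. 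For the converse I would take any $\bu\in\mathcal U^{safe}_{\tau,\kappa}(t_\kappa)$, pick a witnessing safe extension $\hat{\bu}\in\mathcal U^{safe}_\tau(t_\kappa)$ over $[t_\kappa,+\infty[$, and build $\mathbf X$ explicitly: let $s_i^k,v_i^k$ be the sampled position and speed of vehicle $i$ under $\hat{\bu}$ and $u_i^k$ the corresponding constant acceleration on $[t_k,t_k+\tau[$ — these satisfy \eqref{eq:cstr-first}, \eqref{eq:cstr-dyn}, \eqref{eq:cstr-last} and the box bounds by construction. The indicator variables $\varepsilon_{ij,p}^\parallel(k),\varepsilon_{ij,p}^\perp(k),\zeta_i^{acc}(k),\zeta_i^{in}(k),\zeta_i^{out}(k),\eta_i(k)$ are then forced by their defining relations, and the ordering variables $\pi_{ij}^p$ are read off from the trajectory (on which side of $\mathcal C^p_{ij}$ vehicle $i$ passes), with $\pi_{ji}^p=1-\pi_{ij}^p$. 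The delicate point is that, with these assignments, the implications \eqref{eq:safety-1}--\eqref{eq:safety-3} and \eqref{eq:cstr-accel}--\eqref{eq:cstr-deadlock} actually hold: this is where one invokes that $\hat{\bu}$ is safe in the strengthened sense and keeps each vehicle above $v_{min}$ in its no-stop region, together with the geometric meaning of $s_{ij,p}^\parallel,\underline s_{ji,p}^\perp,d_{ij,p}$ recalled in \Cref{sec:coll-region}. I expect this verification — checking each logical constraint against the definitions of the collision and no-stop regions, and against the ``big-M'' encoding — to be the main obstacle, though it is conceptually routine given how the constraints were derived in~\cite{Altche2016}.

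Next I would note that $J^\kappa$ factors through $\pi_{\bu^\kappa}$: with $\tilde J(\bu^\kappa)=\sum_{i\in\NN_{t_\kappa}} w_i^\kappa (u_i^\kappa-u_{i,des}^\kappa)^2$ one has $J^\kappa=\tilde J\circ\pi_{\bu^\kappa}$, and since every $w_i^\kappa>0$ the map $\bu^\kappa\mapsto\big(\sum_{i} w_i^\kappa (u_i^\kappa)^2\big)^{1/2}$ is a genuine (weighted Euclidean) norm on $\bU_{t_\kappa}$ — this is the norm associated with $J^\kappa\circ\pi_{\bu^\kappa}$ — for which $\tilde J(\bu^\kappa)=\|\bu^\kappa-\bu_{des}^\kappa\|^2$. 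Combining this with the set equivalence, minimizing $J^\kappa(\mathbf X)$ over the feasible set and applying $\pi_{\bu^\kappa}$ is the same as minimizing $\|\bu^\kappa-\bu_{des}^\kappa\|^2$ over $\bu^\kappa\in\mathcal U^{safe}_{\tau,\kappa}(t_\kappa)$; because $t\mapsto t^2$ is increasing on $[0,+\infty[$, its minimizers coincide with those of $\|\bu^\kappa-\bu_{des}^\kappa\|$, which is precisely the defining property of a solution to $SP^*$. Existence of a minimizer follows from $\mathbf x^\kappa$ being safe, so the feasible set is nonempty, together with compactness of $\bU_{t_\kappa}$ and closedness of $\mathcal U^{safe}_{\tau,\kappa}(t_\kappa)$ (a pointwise-limit argument on admissible piecewise-constant controls). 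This establishes the theorem.
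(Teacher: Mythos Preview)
The paper does not supply an explicit proof of this theorem: it is presented as an immediate consequence of the construction, the only justification being the sentence opening \Cref{sec:miqp-obj} (``Any piecewise-constant control verifying constraints \eqref{eq:cstr-first} to \eqref{eq:cstr-last} for all $k \geq \kappa$ \ldots is therefore in $\mathcal U^{safe}_{\tau,\kappa}(t_\kappa)$''), together with the remark after \eqref{eq:safety-3} that the safety notion is henceforth \emph{redefined} to mean \eqref{eq:safety-1}--\eqref{eq:safety-3}. Your proposal is therefore not a different route but a careful expansion of what the paper leaves implicit: the bijection between feasible $\mathbf X$ and elements of $\mathcal U^{safe}_{\tau,\kappa}(t_\kappa)$, the factorization $J^\kappa = \tilde J \circ \pi_{\bu^\kappa}$, and the identification of $\tilde J$ with a squared weighted Euclidean norm. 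The argument is correct; the one point worth stressing is that your converse inclusion only goes through because of that redefinition of ``safe'' --- with the original pointwise collision-avoidance condition of \Cref{def:safe-state}, a safe $\hat{\bu}$ need not satisfy the slightly stronger discretized constraints \eqref{eq:safety-1}--\eqref{eq:safety-3}, and the encoding-back step would fail. You already note this (``safe in the strengthened sense''), so the proof stands.
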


Note that the weighting terms $w_i^\kappa$ allow distinguishing between different types of agents, for instance to prioritize emergency services or high-occupancy vehicles. More complex cost functions can also be used, for instance to penalize a forced acceleration more than a forced braking. 

\subsection{Receding horizon properties}\label{sec:infinite-rh}
We now assume that there exists a solution to~\ih{} at time $t_\kappa$, that the vehicles of $\NN_{t_\kappa}$ follow this solution control over $[t_\kappa, t_\kappa+\tau]$, and that the vehicles of $\NN_{t_{\kappa+1}}$ enter safely with a margin $\tau$. From \Cref{def:safe-state,def:safe-entry}, we have the following theorem:
\begin{theorem}[Recursive feasibility]\label{thm:feasible}
	Let $\tau > 0$, $\kappa \geq 0$, $t_\kappa = \kappa \tau$ and  $t_{\kappa+1} = t_\kappa + \tau$. Assume that:
	\begin{itemize}
		\item there exists a solution to~\ih{} at time $t_\kappa$ for the vehicles of $\NN_{t_\kappa}$,
		\item the vehicles of $\NN_{t_\kappa}$ follow this solution control over $[t_\kappa, t_{\kappa+1}]$,
		\item the vehicles of $\NN_{t_{\kappa+1}} \setminus \NN_{t_\kappa}$ enter safely with a margin $\tau$.
	\end{itemize}
	Then there exists a solution to~\ih{} at time $t_{\kappa+1}$ for the vehicles of $\NN_{t_{\kappa+1}}$.
\end{theorem}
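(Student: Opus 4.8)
The plan is to prove the recursive feasibility statement by exhibiting one admissible tuple $\tilde{\mathbf X}$ satisfying the constraints \eqref{eq:cstr-first}--\eqref{eq:cstr-last} of~\ih{} written at $t_{\kappa+1}$ for the vehicles of $\NN_{t_{\kappa+1}}$; this shows the feasible set is non-empty, which is all the theorem asks. I would assemble $\tilde{\mathbf X}$ from two pieces: the ``time-shifted'' $t_\kappa$-solution for the vehicles already inside the supervision area, and a safe continuation for the newcomers supplied by the safe-entry hypothesis (\Cref{def:safe-entry}).

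First I would handle the vehicles already present. Let $\mathbf X$ be the assumed solution of~\ih{} at $t_\kappa$, with associated infinite-horizon piecewise-constant control $\bu\in\mathcal U^{safe}_\tau(t_\kappa)$. For each $i\in\NN_{t_\kappa}\cap\NN_{t_{\kappa+1}}$, I set the variables of $\tilde{\mathbf X}$ at the steps $k\ge\kappa+1$ equal to those of $\mathbf X$ and keep the same ordering variables $\pi_{ij}^p$ for pairs of such vehicles (the position- and speed-dependent indicators $\varepsilon,\zeta,\eta$ are then determined). Since \eqref{eq:cstr-first}--\eqref{eq:cstr-dyn} encode exactly the constant-acceleration update over a step of length $\tau$ and the vehicles of $\NN_{t_\kappa}$ applied $\bu$ over $[t_\kappa,t_{\kappa+1}]$, the state of $i$ at $t_{\kappa+1}$ coincides with $(s_i^{\kappa+1},v_i^{\kappa+1})$, so the initial condition \eqref{eq:cstr-last} holds at $t_{\kappa+1}$; the dynamics \eqref{eq:cstr-first}--\eqref{eq:cstr-dyn}, the collision-avoidance constraints \eqref{eq:safety-1}--\eqref{eq:safety-3} for pairs of already-present vehicles, and the deadlock constraints \eqref{eq:cstr-accel}--\eqref{eq:cstr-deadlock} for these vehicles are inherited verbatim from $\mathbf X$. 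The vehicles of $\NN_{t_\kappa}\setminus\NN_{t_{\kappa+1}}$, which have left the area ($s_i\ge s_i^{out}$), are simply dropped; this only removes constraints and cannot create a violation, since such a vehicle lies outside every $\mathcal C_{ij}\subset[0,s_i^{out}]\times[0,s_j^{out}]$ and never re-enters.

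For the second piece I would invoke \Cref{def:safe-entry}: since $\mathbf x^\kappa$ is safe, $\bu\in\mathcal U^{safe}_\tau(t_\kappa)$, and the newcomers $\NN_{t_{\kappa+1}}\setminus\NN_{t_\kappa}$ enter with a margin $\tau$, the control $\bu$ --- regardless of what those vehicles did between their entry time $t_1\in(t_\kappa,t_{\kappa+1}]$ and $t_{\kappa+1}$ --- keeps the system of $\NN_{t_{\kappa+1}}$ in a safe state at $t_{\kappa+1}$ and keeps the vehicles of $\NN_{t_\kappa}$ safe for all $t\ge t_\kappa$. Hence there is a dynamically admissible continuation over $[t_{\kappa+1},+\infty[$, consistent with the old vehicles still following $\bu$, along which no pair ever enters its collision region; I would pick such a continuation for the newcomers, together with the induced ordering and indicator variables, to complete $\tilde{\mathbf X}$, checking in passing that the explicit deadlock constraints \eqref{eq:cstr-accel}--\eqref{eq:cstr-deadlock} can be met for a newcomer as well --- which holds because it enters near $s_j=0$, \emph{i.e.} before its acceleration region $\mathcal A_j$, and the sizing $s_j^{acc}\le\underline s_j^\perp-v_{min}^2/(2u_a)$ makes the minimum-speed requirement compatible with any collision-free plan. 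I expect this second piece to be the only real obstacle: adding a vehicle activates new collision regions, the entry time $t_1$ need not be a multiple of $\tau$, and the minimum-speed constraints must not clash with collision avoidance near a no-stop region --- which is precisely what the ``margin $\tau$'' clause in \Cref{def:safe-entry} and the off-line sizing of the no-stop and acceleration regions are designed to absorb. The first piece is just the routine receding-horizon observation that a time-shifted feasible trajectory remains feasible.
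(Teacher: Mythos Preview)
Your proposal is correct and follows essentially the same route as the paper. The paper's own proof is much terser: it works entirely at the level of the ``safe state'' abstraction, invoking \cref{thm:ih-miqp} and \cref{def:safe-state,def:safe-entry} directly to conclude that $\NN_{t_{\kappa+1}}$ is in a safe state at $t_{\kappa+1}$, and then identifies safety with feasibility of \ih{}; your version unpacks this into an explicit constraint-by-constraint construction of the feasible tuple, which is the same argument with the definitions unfolded.
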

\begin{proof}From \cref{def:safe-state,def:safe-entry}, and using \cref{thm:ih-miqp}, we know that the first two hypotheses guarantee that the vehicles in $\NN_{t_\kappa}$ are in a safe state at time $t_{\kappa+1}$. Moreover, the third hypothesis ensures that the vehicles in $\NN_{t_{\kappa+1}}$ also are in a safe state at $t_{\kappa+1}$ regardless of the control applied by the vehicles of $\NN_{t_{\kappa+1}} \setminus \NN_{t_{\kappa}}$ up to time $t_{\kappa+1}$. By \cref{def:safe-state}, there exists a feasible solution to~\ih{} thus proving the theorem.
\end{proof}

We now state that the~\ih{} formulation effectively prevents the apparition of deadlocks%
\ifapp
The proof of this theorem can be found in \Cref{app:proof-deadlock}. 
\else
; this theorem is demonstrated in \cite{Altche2017}.
\fi 
\begin{theorem}[Deadlock avoidance]\label{thm:deadlock}
	Let $\kappa \geq 0$ and assume that, for all $\kappa \leq k \leq \kappa_0$, the conditions of \cref{thm:feasible} remain satisfied at time $t_k$. There exists a feasible solution of \ih{} at time $t_{\kappa_0}$ in which all the vehicles in $\NN_{t_{\kappa_0}}$ exit the supervision area in finite time.
\end{theorem}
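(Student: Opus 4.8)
The plan is to use recursive feasibility to obtain one feasible solution of \ih{} at $t_{\kappa_0}$ and then to exhibit, inside the feasible set at $t_{\kappa_0}$, a solution in which every vehicle leaves the area. Applying \cref{thm:feasible} inductively over $k=\kappa,\dots,\kappa_0$, the standing hypotheses make \ih{} feasible at every $t_k$; I fix one feasible solution at $t_{\kappa_0}$, which in particular yields a consistent assignment of the ordering binaries $\pi_{ij}^p$ together with an admissible, collision- and deadlock-free infinite-horizon trajectory of $\NN_{t_{\kappa_0}}$ that I will reshape below.

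I would then record two structural facts. (i) Once a vehicle $i$ passes $\overline s_i^\perp$ it has left every collision region and every following region, so the indicators $\varepsilon_{ij,p}^\perp$ and $\zeta_i^{out}$ stay equal to $1$ and the constraints \eqref{eq:safety-1}--\eqref{eq:safety-3}, \eqref{eq:cstr-accel}--\eqref{eq:cstr-deadlock} involving $i$ become vacuous; hence any feasible solution can be edited, without touching the other vehicles, so that $i$ afterwards keeps speed at least $v_{min}$ and reaches $s_i^{out}$ in finite time. It therefore suffices to produce a feasible solution in which every vehicle eventually reaches $\overline s_i^\perp$. (ii) In a feasible solution, a vehicle that never reaches $\overline s_i^\perp$ has a nondecreasing, bounded, hence convergent position and eventually comes to a permanent stop; this stop cannot lie in $[\underline s_i^\perp,\overline s_i^\perp]$, since \eqref{eq:cstr-deadlock} forces $v_i\ge v_{min}>0$ there, nor (when \eqref{eq:cstr-accel} is active) inside $[\underline s_i^{acc},\underline s_i^\perp]$, and the only cap a crossing constraint \eqref{eq:safety-1} can place on such a vehicle is $s_i\le\underline s_{ij,p}^\perp$, a position lying in its no-stop region. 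So a stalled vehicle ends up ``parked'' at some $\sigma_i<\underline s_i^\perp$, strictly upstream of all of its crossing collision regions and behind its predecessors on shared segments, the only possible obstruction being a following constraint behind another parked vehicle; in particular a vehicle that starts inside its no-stop region must reach $\overline s_i^\perp$ in the fixed solution.

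Given this, I would build the draining solution in two phases. In phase one, every vehicle that reaches $\overline s_i^\perp$ in the fixed solution — this includes all vehicles starting inside their no-stop regions and their yield partners — follows the fixed trajectory and, by (i), exits by some finite time $T$; meanwhile the parked vehicles are brought to a full stop upstream of all their collision regions, which does not break feasibility because a parked vehicle that is ``first'' at a crossing conflict has (one checks) only parked partners there, so nobody gets blocked, and because the ordering binaries of parked--parked conflicts are then free to be reset. In phase two, from time $T$ on, I would order the parked vehicles along a linear extension of the forced precedences (the orders imposed by components $\mathcal C_{ij}^p$ that contain $(0,0)$, which one checks form a partial order), set $\pi_{ij}^p=1$ on all remaining components of parked--parked pairs in agreement with that order, and then drive the parked vehicles through one after another: when its turn comes, a parked vehicle faces at each conflict either a partner that has already cleared (a phase-one vehicle, or an earlier parked one) or a partner at which it is itself declared first (a later parked vehicle, still sitting far upstream), so it may accelerate freely, cross its acceleration and no-stop regions — entering the latter at speed at least $v_{min}$ by the design of $\mathcal A_i$ — pass $\overline s_i^\perp$ in finite time, and exit by (i). After $n$ stages all vehicles have exited, which proves the claim.

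The main obstacle is phase two: one must be certain the parked vehicles can be linearly ordered so that none is ever forced to yield to, or to stop inside its no-stop region because of, a not-yet-cleared vehicle. The leverage — already used in (ii) — is that in a safe state no vehicle can be permanently stopped at a crossing waiting point $\underline s_{ij,p}^\perp$, since that point lies in its no-stop region where \eqref{eq:cstr-deadlock} is active; this rules out genuine circular blockings and underlies both the ``parked vehicles sit strictly upstream of their conflicts'' dichotomy of (ii) and the acyclicity of the forced precedences. The rest is bookkeeping: justifying the editing steps under the $\tau$-discretization and verifying the ``corner-cutting'' terms of \eqref{eq:safety-2}--\eqref{eq:safety-3} along the driven trajectories.
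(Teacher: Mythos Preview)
Your proposal and the paper share the same core leverage: vehicles that never clear must end up stopped strictly before $\underline s_i^\perp$ (because \eqref{eq:cstr-deadlock} forbids stopping inside the no-stop region), and for such vehicles the priority binaries $\pi_{ij}^p$ can be reassigned without violating \eqref{eq:safety-1}--\eqref{eq:safety-3}. The paper packages this differently and more compactly: it builds a single directed \emph{priority graph} on $\NN_{t_{\kappa_0}}$ with an edge $i\to j$ whenever $\pi_{ij}^p=1$, observes that a deadlock corresponds to a cycle among vehicles all parked at $s_i\le\underline s_i^\perp$, and then invokes a separate graph-theoretic lemma (existence of a minimum feedback arc set whose reversal yields an acyclic graph) to flip just enough $\pi$'s to break every cycle. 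An acyclic priority graph gives a topological order in which vehicles can be admitted one by one. Your two-phase construction --- first drain the vehicles that already exit, then impose a fresh linear order on the parked ones extending only the $(0,0)$-forced precedences --- reaches the same endpoint but is longer and carries more bookkeeping; the feedback-arc-set step replaces your entire phase-two ordering argument in one stroke.

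One factual slip to fix: your claim (i) asserts that once $i$ passes $\overline s_i^\perp$ it has ``left every collision region'' and the indicators $\varepsilon_{ij,p}^\perp$ become $1$. That is not what $\overline s_i^\perp$ is: by definition $\mathcal D_i=[\underline s_i^\perp,\overline s_i^\perp]$ is the smallest interval containing the \emph{entry} points $\min\bigl(\Pi_{s_i}\mathcal C_{ij}^p\bigr)$, so $\overline s_i^\perp$ is only the last entry point, not the last exit point $\overline s_{ij,p}^\perp$. Passing $\overline s_i^\perp$ therefore neither forces $\varepsilon_{ij,p}^\perp=1$ nor makes \eqref{eq:safety-1}--\eqref{eq:safety-3} vacuous for $i$. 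This does not wreck your overall plan --- the real reduction is ``acyclic priorities $\Rightarrow$ all vehicles can be driven out in order,'' which the paper also states in one line without further detail --- but your justification of (i), and the claim that one can edit $i$'s trajectory past $\overline s_i^\perp$ ``without touching the other vehicles,'' must be reworked accordingly.
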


Note that \cref{thm:deadlock} only ensures that, at all times, there exists a solution where all the vehicles inside the supervision at this particular time eventually exit. However, there is no guarantee that such a solution will actually be selected, for instance if one driver wishes to stop although there is no other vehicle. There is also no fairness guarantee, \textit{i.e.} it is possible that one vehicle is forced to remain stopped for an arbitrarily long time, for instance if there is a very heavy traffic coming from another direction. Future developments will focus devising more complex objectives function to take traffic efficiency and fairness into account.

\subsection{Multiple paths choices}
The above formulation assumes that the path of each vehicle is known in advance. However, this may not be realistic in the context of semi-autonomous cars where drivers can decide to change paths, for instance to avoid an obstacle on the road or use another itinerary. Using additional variables to indicate the path to which a vehicle is assigned, our formulation can be extended to handle multiple possible paths for each vehicle. Due to length limitations, this extension will be detailed in future work.

\section{Finite Horizon Formulation}\label{sec:finite-horizon}
In \Cref{sec:infinite-rh}, we presented an infinite horizon formulation to solve the minimally deviating supervision problem. However, due to the infinite number of variables, this formulation is not suitable for practical resolution. In this section, we derive an equivalent finite horizon formulation that can be implemented and solved using standard numerical techniques.

In what follows, we let $K \geq 1$ and we denote by \fh{K} the restriction of~\ih{} at time $t_\kappa$ to the variables at steps $k$ with $\kappa \leq k \leq \kappa+K$, and we only consider the constraints~\eqref{eq:cstr-first} to~\eqref{eq:cstr-last} up to step $\kappa+K$. The objective function is unchanged. A solution to \fh{K} at time $t_\kappa$ allows to compute a control preventing collisions up to time $t_\kappa + K\tau$; however, due to the dynamics of the vehicles, the state reached at $t_\kappa + K\tau$ may not be safe. Since \fh{K} only has a subset of the constraints of \ih{}, we can formulate the following proposition:

\begin{proposition}\label{prop:ih-to-fh}
Let $K \geq 1$ and let $\mathbf X$ be a solution of \ih{} at step $\kappa$. The restriction of $\mathbf X$ to the first $K+1$ time steps is a feasible solution to \fh{K}.
\end{proposition}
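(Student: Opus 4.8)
The plan is to observe that \fh{K} is, by construction, nothing more than \ih{} with all variables at steps $k > \kappa + K$ deleted and all constraints involving those variables discarded. Consequently, any solution of the larger problem restricts to a feasible point of the smaller one, and the claim is essentially a tautology once the bookkeeping is spelled out. This is why the statement is phrased as a \textbf{proposition} rather than a theorem: it is the "easy direction", with the substantive content (that a finite horizon actually suffices) deferred to the converse, to be proved later in \Cref{sec:finite-horizon}.

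Concretely, I would first fix $K \geq 1$ and let $\mathbf X$ be a solution of \ih{} at step $\kappa$, so that $\mathbf X$ assigns values to the variables $s_i^k, v_i^k, u_i^k$ and all the binary indicator variables ($\pi_{ij}^p$, $\varepsilon_{ij,p}^\parallel(k)$, $\varepsilon_{ij,p}^\perp(k)$, $\zeta_i^{acc}(k)$, $\zeta_i^{in}(k)$, $\zeta_i^{out}(k)$, $\eta_i(k)$) for every $i,j \in \NN_{t_\kappa}$ and every $k \geq \kappa$, and satisfies constraints \eqref{eq:cstr-first}--\eqref{eq:cstr-last} for all $k \geq \kappa$. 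Define $\mathbf X'$ to be the sub-tuple of $\mathbf X$ retaining exactly the variables indexed by steps $k$ with $\kappa \leq k \leq \kappa + K$ (together with the step-independent variables $\pi_{ij}^p$). Then $\mathbf X'$ is a valid assignment of the variables of \fh{K}.

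Next I would verify feasibility. Each constraint of \fh{K} is, by the definition of \fh{K}, one of the constraints \eqref{eq:cstr-first}--\eqref{eq:cstr-last} instantiated at some step $k$ with $\kappa \leq k \leq \kappa + K$; here one should note that the dynamics and collision-avoidance constraints at step $k$ only ever reference variables at steps $k$ and $k+1$, so a constraint of \fh{K} at step $k \leq \kappa + K$ involves only variables present in $\mathbf X'$ (the latest being those at step $\kappa + K + 1$ for the dynamics equations \eqref{eq:cstr-first}--\eqref{eq:cstr-dyn} — so in fact \fh{K} as described, with constraints only up to step $\kappa+K$, references variables up to step $\kappa+K$, matching its variable set). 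Since $\mathbf X$ satisfies every such constraint — it satisfies them for \emph{all} $k \geq \kappa$ — and $\mathbf X'$ agrees with $\mathbf X$ on all the relevant entries, $\mathbf X'$ satisfies every constraint of \fh{K}. Hence $\mathbf X'$ is feasible for \fh{K}.

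The only point requiring a modicum of care — and the closest thing to an "obstacle" — is the indexing convention at the horizon boundary: one must make sure that the finite-horizon constraint set, as defined ("constraints \eqref{eq:cstr-first} to \eqref{eq:cstr-last} up to step $\kappa+K$"), does not inadvertently reference a variable at step $\kappa+K+1$ that has been excluded from $\mathbf X'$, or, if it does, that such a variable is included in the variable set of \fh{K}. Either reading is handled the same way: the restriction of $\mathbf X$ simply provides whatever values \fh{K} needs, and they satisfy the constraints because $\mathbf X$ did. Beyond this, nothing is needed: the objective function $J^\kappa$ depends only on the step-$\kappa$ controls $u_i^\kappa$, which are common to both formulations, so it is unchanged, and feasibility is all the proposition asserts. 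This completes the argument.
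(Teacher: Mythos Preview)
Your proposal is correct and takes essentially the same approach as the paper: the paper's entire justification is the single clause preceding the proposition, ``Since \fh{K} only has a subset of the constraints of \ih{},'' and provides no further proof. Your write-up simply expands this one-line observation into explicit bookkeeping, which is fine but not required.
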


Using the global bounds $u_a$, $u_b$, $u_{max}$ and $v_{max}$ defined in \cref{sec:vehicle-dynamics}, we will now prove a reciprocal implication to \cref{prop:ih-to-fh}: if $K$ is chosen large enough, any solution of \fh{K} can be used to construct a solution of \ih{}.

As presented in~\cite{Altche2016b}, the key idea of the proof lies in the choice of a planning horizon long enough to allow any vehicle to fully stop. The structure of the demonstration is as follows: \cref{lem:canstop} gives a lower bound on the time horizon to allow a single isolated vehicle to stop using discrete dynamics, although with a potential risk of rear-end collisions from following vehicles. In \cref{prop:multcanstop}, we give a slightly higher bound on the time horizon ensuring that all vehicles in a line can all safely stop without rear-end collisions. Finally, in \cref{prop:recursive} we give a bound on $K$ ensuring the recursive feasibility of \fh{K}; this allows formulating \cref{thm:equivalence}, stating the equivalence of \fh{K} and \ih{}.
\ifapp In this section, we only present sketches of proofs for each result; detailed demonstrations can be found in \Cref{app:proofs-fh}.
\else Detailed demonstrations can be found in \cite{Altche2017}.
\fi

\begin{lemma}
\label{lem:canstop}At a time $t_\kappa$, consider a horizon $T = K \tau$ with $T \geq \frac{v_{max}}{|u_{b}|} + \tau$. Let $i \in \NN_{t_\kappa}$ be a vehicle for which there exists a piecewise-constant control $(u_i^k)_{\kappa \leq k < \kappa+K}$ such that, for all $\kappa \leq k < \kappa+K$, $u_i^k \in [\underline u_i, \overline u_i]$, corresponding to a dynamically feasible trajectory $s_i(t)$ over $[t_\kappa, t_\kappa + T + \tau]$.

\noindent There exists a discrete control $(\tilde u_i^k)_{\kappa \leq k \leq \kappa+K}$ such that for all $\kappa \leq k \leq \kappa+K$, $u_i^k \in [\underline u_i, \overline u_i]$ and $\tilde u_i^{\kappa} = u_i^{\kappa}$, and for which the corresponding dynamically feasible trajectory $t \mapsto \tilde x_i(t) = (\tilde s_i(t), \tilde v_i(t))$ verifies $\tilde s_i(t_\kappa + T + \tau) \leq s_i(t_1 + T)$ and $\tilde v_i = 0$ over $[t_\kappa + T, t_\kappa + T + \tau]$.
\end{lemma}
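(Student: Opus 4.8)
The plan is to exhibit $\tilde u_i$ explicitly, following the guiding idea that $T$ has been chosen long enough for any vehicle to brake to a complete stop. I keep the first control unchanged, $\tilde u_i^\kappa = u_i^\kappa$ (which forces $(\tilde s_i^{\kappa+1},\tilde v_i^{\kappa+1}) = (s_i^{\kappa+1},v_i^{\kappa+1})$), and from step $\kappa+1$ onward I apply the hardest braking available to vehicle $i$, namely $\underline u_i$, at every step, until one more full braking step would make the velocity negative; at that point I use a single braking step of reduced magnitude bringing the velocity exactly to $0$, and I set every remaining control, including $\tilde u_i^{\kappa+K}$, to $0$. It then remains to verify (i) that $\tilde u_i$ is admissible and brings the vehicle to rest no later than step $\kappa+K$, and (ii) that the resulting trajectory is, at the relevant instants, never ahead of the original one.

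For (i): feasibility of the original trajectory gives $v_i^{\kappa+1} \le \overline v_i \le v_{max}$, and braking from $v_i^{\kappa+1}$ at rate $|\underline u_i|$ reaches zero velocity after $\ceil{v_i^{\kappa+1}/(|\underline u_i|\tau)} \le \ceil{v_{max}/(|u_b|\tau)}$ steps, where I used $|\underline u_i| \ge |u_b|$. The hypothesis $T = K\tau \ge v_{max}/|u_b| + \tau$ rearranges to $K-1 \ge v_{max}/(|u_b|\tau)$, and since $K-1$ is a nonnegative integer it is at least the ceiling on the right; hence the braking phase fits inside steps $\kappa+1,\dots,\kappa+K-1$ and $\tilde v_i^{\kappa+K} = 0$, so setting $\tilde u_i^{\kappa+K}=0$ is consistent. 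Admissibility is then routine: $\tilde u_i^\kappa \in [\underline u_i,\overline u_i]$ by hypothesis, the full and reduced braking steps lie in $[\underline u_i,0]$, the remaining controls are $0$, and $\tilde v_i$ equals $v_i$ on the first step and is nonincreasing with values in $[0,v_i^{\kappa+1}] \subseteq [0,\overline v_i]$ afterward. In particular $\tilde v_i \equiv 0$ on $[t_\kappa+T,\,t_\kappa+T+\tau]$.

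For (ii), the crucial observation is that the original control obeys $u_i^k \ge \underline u_i$ and the original velocity is nonnegative, so iterating the velocity update \eqref{eq:cstr-dyn} gives $v_i^{\kappa+1+m} \ge \max\bigl(0,\, v_i^{\kappa+1} - m\,|\underline u_i|\tau\bigr)$ for every $m$ with $\kappa+1+m \le \kappa+K$ --- and by construction the right-hand side equals $\tilde v_i^{\kappa+1+m}$. Hence $\tilde v_i^k \le v_i^k$ at every step $\kappa \le k \le \kappa+K$ (with equality at $k=\kappa$ and $k=\kappa+1$), and summing the trapezoidal position update \eqref{eq:cstr-first} term by term, using $\tilde s_i^\kappa = s_i^\kappa$, yields $\tilde s_i^k \le s_i^k$ over the same range. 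In particular $\tilde s_i^{\kappa+K} \le s_i^{\kappa+K} = s_i(t_\kappa+T)$, and since the constructed vehicle is at rest from step $\kappa+K$ on, $\tilde s_i(t_\kappa+T+\tau) = \tilde s_i^{\kappa+K} \le s_i(t_\kappa+T)$, as claimed. The point I expect to be the only real subtlety --- and the one most easily gotten wrong --- is this choice of braking rate: braking at the common bound $u_b$ rather than at each vehicle's own $\underline u_i$ would break (ii), since the original trajectory may itself brake as hard as $\underline u_i$ and come to rest at a strictly smaller position; the uniform bound $u_b$ appears only in (i), inside the sufficient horizon condition, whose extra additive $\tau$ accounts for the sacrificed first step and for the rounding in the ceiling.
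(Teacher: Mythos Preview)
Your proof is correct and follows essentially the same strategy as the paper: keep the first control, brake hard from step $\kappa+1$, truncate the last braking step to land exactly at zero velocity, and use the pointwise inequality $\tilde u_i^k \le u_i^k$ to bound positions. The only difference is the choice of braking rate: you brake at the vehicle's own floor $\underline u_i$, whereas the paper brakes at $\min(u_b,u_i^k)$; both guarantee $\tilde u_i^k \le u_i^k$, and your observation that braking merely at the common bound $u_b$ would not suffice is exactly the reason the paper takes the minimum with $u_i^k$.
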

\begin{proof}[Sketch of proof]
$\frac{v_{max}}{|u_b|}$ is an upper bound on the required time for any vehicle to stop by applying a control $u_b$, which by definition is dynamically feasible. The additional $\tau$ accounts for the fact that we require $\tilde u_i^\kappa = u_i^\kappa$ at the first time step.
\end{proof}

In the following proposition and noting $\ceil{\cdot}$ the ceiling function, we prove a bound ensuring that a line of vehicles can safely stop before the leader reaches its final computed position at the end of the time horizon, without risk of rear-end collisions:
\begin{proposition}
\label{prop:multcanstop}At a time $t_\kappa$, suppose that $p$ vehicles of $\NN_{t_\kappa}$ (denoted by $1, \dots, p$ from rear to front) are following one another. Consider a horizon $T_{stop} = K_{stop}\tau \geq \frac{v_{max}}{|u_b|} + (p-1) \left(1+ \ceil{\frac{u_{max}}{|u_b|}}\right) \tau + \tau$, and assume that every vehicle $i \in \{1,\dots,p\}$ has a \textit{safe} discrete control $(u_i^k)_{\kappa \leq k < \kappa+K}$ such that, for all $\kappa \leq k < \kappa+K$, $u_i^k \in [\underline u_i, \overline u_i]$. We let $t \mapsto x_i(t)$ be the trajectory over $[t_\kappa, t_\kappa + T]$ for vehicle $i$ under control $(u_i^k)$.

\noindent For all $i \in \{1,\dots,p\}$, there exists a \textit{safe} discrete control $(\hat u_i^k)_{\kappa \leq k \leq \kappa+K}$ such that for all $\kappa \leq k \leq \kappa+K$, $u_i^k \in [\underline u_i, \overline u_i]$, $\hat u_i^{\kappa} = u_i^{\kappa}$ and for which the corresponding dynamically feasible and safe trajectory $t \mapsto \hat x_i(t) = (\hat s_i(t), \hat v_i(t))$ verifies $\hat s_i(t_\kappa + T + \tau) \leq s_i(t_1 + T)$ and $\hat v_i = 0$ over $[t_\kappa + T, t_\kappa + T + \tau]$.
\end{proposition}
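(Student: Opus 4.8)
The plan is to build the braking trajectories inductively from the rear-most vehicle to the front, using \cref{lem:canstop} as the building block but staggering the instants at which each vehicle is required to have stopped. The key observation is that if vehicle $i-1$ (immediately behind $i$) has already been made to follow a trajectory that comes to rest, then vehicle $i$ only needs to ensure it does not rear-end $i-1$ \emph{while $i-1$ is still braking}; once $i-1$ is stopped, the ``following'' constraints \eqref{eq:safety-2}--\eqref{eq:safety-3} become a fixed position bound. So I would process vehicles in the order $1,2,\dots,p$: first apply \cref{lem:canstop} to vehicle $1$ to get a control $(\hat u_1^k)$ bringing it to rest by time $t_\kappa + T_1$, where $T_1 = \frac{v_{max}}{|u_b|} + \tau$. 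Then, for each subsequent vehicle $i$, apply \cref{lem:canstop} but delay the start of its forced braking so that it keeps tracking its original (safe) trajectory $s_i(t)$ for a few extra steps, long enough for vehicle $i-1$ to have settled into its final position; only then does vehicle $i$ brake to a stop.

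The core quantitative step is bounding how long each such delay needs to be, which is where the term $(p-1)\left(1+\ceil{\tfrac{u_{max}}{|u_b|}}\right)\tau$ comes from. When vehicle $i$ switches from its desired control to the emergency brake $u_b$, its speed may be as large as $v_{max}$, but more importantly its acceleration could jump from as much as $u_{max}$ down to $u_b$; the $\ceil{u_{max}/|u_b|}$ factor measures, in units of $\tau$, the ``catch-up'' distance that this deceleration discrepancy could cost relative to vehicle $i-1$, and the extra $1$ accounts for the one-step delay forced by the requirement $\hat u_i^\kappa = u_i^\kappa$ at the initial step (exactly as in \cref{lem:canstop}). Accumulating one such block of delay per vehicle in the line gives the stated horizon $T_{stop}$. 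Concretely, I would define the braking-onset step for vehicle $i$ to be roughly $\kappa + (p-i)\left(1+\ceil{u_{max}/|u_b|}\right)$ plus the single-vehicle stopping horizon, verify by the discrete dynamics \eqref{eq:cstr-first}--\eqref{eq:cstr-dyn} that at every step $k$ the gap $\hat s_i^k - \hat s_{i-1}^k$ stays at least $d_{i-1,i}$ (and the between-steps refinement \eqref{eq:safety-3}), and check that by $t_\kappa+T$ all vehicles are at rest with $\hat s_i(t_\kappa+T+\tau)\le s_i(t_\kappa+T)$.

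For the safety bookkeeping I would argue in two regimes. Before its braking onset, vehicle $i$ follows its original control $(u_i^k)$, which is safe by hypothesis, so no collision constraint involving $i$ is violated up to that point; crossing constraints \eqref{eq:safety-1} with third vehicles are unaffected because $\hat s_i^k \le s_i^k$ for all $k$ by construction (braking only ever pulls a vehicle back), and $\hat s_i$ is nondecreasing, so no vehicle is pushed \emph{into} a collision region it would otherwise have avoided. After braking onset, the only new risk is a rear-end with $i-1$, and the delay was chosen precisely so that $i$ brakes only once $i-1$'s trajectory has become monotone-to-rest and lies behind where vehicle $i$ will end up. The main obstacle I anticipate is making the between-steps constraint \eqref{eq:safety-3} hold during the transient where vehicle $i-1$ is decelerating hard while vehicle $i$ is still moving at its earlier speed: this is exactly the worst case that the $\ceil{u_{max}/|u_b|}$ term is sized for, and verifying that the chosen delay suffices will require a careful discrete summation of the relative position $\hat s_i^k - \hat s_{i-1}^k$ over that window, using $\hat v_i \le v_{max}$, $u_b \le \hat u_{i-1}^k$, and the fact that $i-1$ was itself safe before its own onset. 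Everything else reduces to the single-vehicle estimate already packaged in \cref{lem:canstop}.
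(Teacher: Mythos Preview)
Your construction is genuinely different from the paper's, and it does not work as stated. The paper does \emph{not} have vehicle $i$ follow its original control $(u_i^k)$ for a fixed delay and then switch to the braking profile of \cref{lem:canstop}. Instead, for each $i$ the paper takes $(\hat u_i^k)$ to be the \emph{minimum safe control} in the set $\{(\hat u_i^k): \hat u_i^\kappa=u_i^\kappa,\ \hat u_i^k\le u_i^k\text{ and safe w.r.t. }\hat u_{i-1}\}$; this control adapts step-by-step to how fast vehicle $i-1$ is still coming from behind. The bound $K_i=K_{i-1}+1+\lceil u_{max}/|u_b|\rceil$ arises from a \emph{speed-matching} argument: once $\hat v_i^k\ge \hat v_{i-1}^k$, vehicle $i$ can brake at $\min(\hat u_{i-1}^k,u_i^k,u_b)$, and the discretization overshoot at the crossing step is bounded by $(u_{max}+|u_b|)\tau$, which costs $1+\lceil u_{max}/|u_b|\rceil$ extra steps. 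The per-vehicle increment is therefore \emph{not} a pre-scheduled braking delay.

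Your fixed-delay scheme fails in exactly the worst case the paper's sketch describes. Take $\tau=1$, $u_b=-5$, $u_{max}=5$, $v_{max}=100$, $d=5$; vehicle $1$ (rear) at $s_1=0$, $v_1=100$, $u_1\equiv 0$; vehicle $2$ (front) at $s_2=1005$, $v_2=0$, $u_2=5$ until $v_{max}$. One checks the original control is safe over the horizon $K_{stop}=23$. Your delay is $1+\lceil 5/5\rceil=2$, so vehicle $2$ accelerates for steps $0,1,2$ (reaching speed $15$) and then brakes, stopping near $s_2\approx 1050$. But vehicle $1$, braking from step $1$, coasts to $s_1=1100$. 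The rear vehicle overtakes the front one: collision. The point is that the time vehicle $i$ must wait before it can safely brake is governed by when $\hat v_{i-1}$ drops to $\hat v_i$, which can be of order $v_{max}/|u_b|$ and is not a fixed $O(\tau)$ quantity per vehicle. Your proposal also contains an internal inconsistency (you process $1,\dots,p$ rear-to-front but then write the onset as $\kappa+(p-i)(\cdot)$, i.e.\ front first), and the transient you flag as dangerous---``$i-1$ decelerating hard while $i$ still at its earlier speed''---is actually the benign direction, since $i-1$ is behind and slowing. The actual risk is the reverse: $i$ braking while $i-1$ is still fast, and that is precisely what the paper's adaptive minimum-safe-control construction handles and your fixed-delay construction does not.
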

\begin{proof}[Sketch of proof]
The worst case that needs to be taken into account corresponds to a situation where the initial states of the vehicles require each of them to accelerate in order to avoid a rear-end collision from the vehicle behind. This rather extreme situation happens when a vehicle goes faster than the one it is following, and the two are too close to allow a safe deceleration. In this case, the rearmost vehicle can always brake with the control from \cref{lem:canstop}, until it decelerates below the speed of the vehicle in front of it. The second rearmost vehicle can then decelerate, then the third and up to the front-most vehicle. The term $\left(1+ \ceil{\frac{u_{max}}{|u_b|}}\right) \tau$ arises from the piecewise-constant control hypothesis, and vanishes as $\tau$ goes to $0$. Note that the condition $K_{stop}\tau \geq \frac{v_{max}}{|u_b|} + \frac{v_{max}}{u_a} + \tau$ also provides the same guarantees; depending on the value of $p$, this second bound might be more efficient.
\end{proof}

\begin{remark}
	The bound from~\cref{prop:multcanstop} depends on the number of vehicles in a line, and can become quite high when $p$ is large. It can be proven that the condition $K_{stop}\tau \geq \frac{v_{max}}{|u_b|} + \frac{v_{max}}{u_a} + \tau$ also provides the same guarantees; depending on the value of $p$, this second bound might be more efficient.
\end{remark}

We can now prove the recursive feasibility of \fh{K} for a large enough $K$, as follows:
\begin{proposition}\label{prop:recursive}
Consider a time $t_\kappa$, and assume that at most $p$ vehicles are following one another at all times $t \geq t_{\kappa}$. We set $d = \max_{t \geq t_\kappa, i \in \NN_t} \left( \overline s_i^\perp - s_i^{acc} \right)$ and we let $T_{stop}$ be the stopping horizon from \cref{prop:multcanstop} for $p$ vehicles; moreover, we define $T_{rec} = K_{rec} \tau \geq T_{stop} + \frac{v_{min}}{u_a} + \frac{d}{v_{min}} + \tau$. We assume that all vehicles of $\mathcal N_t$ for all $t > t_\kappa$ enter safely with a margin $\tau$.

Problem \fh{K_{rec}} is recursively feasible under the hypotheses of \cref{thm:feasible}, \textit{i.e.} if there exists a solution to \fh{K_{rec}} at time $t_\kappa$ for the vehicles of $\NN_{t_\kappa}$, there exists a solution at $t_{\kappa} + \tau$ for the vehicles of $\NN_{t_\kappa + \tau}$.
\end{proposition}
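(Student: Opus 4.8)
The plan is to turn the given feasible solution of \fh{K_{rec}} at $t_\kappa$ into an explicit feasible solution at $t_{\kappa+1}=t_\kappa+\tau$ by a \emph{stop-and-clear} construction. Write $\mathbf X$ for the given solution and, for $i\in\NN_{t_\kappa}$, let $t\mapsto x_i(t)=(s_i(t),v_i(t))$ be the associated dynamically admissible, collision-free, deadlock-free trajectory over $[t_\kappa,t_\kappa+K_{rec}\tau]$. The new instance of \fh{K_{rec}} needs trajectory data at steps $\kappa+1,\dots,\kappa+1+K_{rec}$, i.e. over the one-step-shifted, one-step-longer horizon $[t_{\kappa+1},t_{\kappa+1}+K_{rec}\tau]$; I would start it from the states $x_i(t_{\kappa+1})$ actually reached under $\mathbf X$ (so the initial condition \eqref{eq:cstr-last} holds by construction), reuse the orderings $\pi_{ij}^p$ of $\mathbf X$, and then specify admissible controls over the remaining horizon so that all of \eqref{eq:cstr-first}--\eqref{eq:cstr-last} stay satisfied, while also accommodating the vehicles of $\NN_{t_{\kappa+1}}\setminus\NN_{t_\kappa}$.

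\textbf{The construction.} It uses the three pieces of the bound $K_{rec}\tau\geq T_{stop}+\frac{v_{min}}{u_a}+\frac{d}{v_{min}}+\tau$. For each vehicle $i$ there are two cases. If $i$ can be braked to rest before reaching $s_i^{acc}$, do so: it then sits outside $\mathcal A_i\cup\mathcal D_i=[s_i^{acc},\overline s_i^\perp]$ and may stay at rest forever. Otherwise $i$ must traverse $\mathcal A_i\cup\mathcal D_i$ before stopping; following $\mathbf X$ (or plain forward motion) does this in bounded time, since inside $\mathcal A_i$ the vehicle reaches speed $v_{min}$ within time $\frac{v_{min}}{u_a}$ while still inside $\mathcal A_i$ (because $s_i^{acc}\leq\underline s_i^\perp-\frac{v_{min}^2}{2u_a}$) by \eqref{eq:cstr-accel}, and then covers the rest of $\mathcal A_i\cup\mathcal D_i$, of length at most $d$, at speed at least $v_{min}$ by \eqref{eq:cstr-deadlock}; so after at most $\frac{v_{min}}{u_a}+\frac{d}{v_{min}}$ extra time (plus one step $\tau$ for the piecewise-constant discretization) $i$ is past $\overline s_i^\perp$. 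From that moment I would invoke \cref{prop:multcanstop} on each line of following vehicles to bring every vehicle to a permanent stop within the further horizon $T_{stop}$, without rear-end collisions and without overshooting its position under $\mathbf X$. The worst case --- a vehicle that must first cross $\mathcal A_i$ from rest at acceleration $u_a$, then $\mathcal D_i$ at speed $v_{min}$, and finally brake as the front of a line of $p$ followers --- takes at most $\frac{v_{min}}{u_a}+\frac{d}{v_{min}}+\tau+T_{stop}\leq K_{rec}\tau$, so the maneuver fits inside the new horizon, and every vehicle ends at rest outside $[s_i^{acc},\overline s_i^\perp]$.

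\textbf{Verification and new vehicles.} It then remains to check that the restriction of this trajectory to steps $\kappa+1,\dots,\kappa+1+K_{rec}$ satisfies all constraints. Constraints \eqref{eq:cstr-first}--\eqref{eq:cstr-dyn} and \eqref{eq:cstr-last} hold by construction; \eqref{eq:cstr-accel}--\eqref{eq:cstr-deadlock} hold because no vehicle is ever at rest inside $[s_i^{acc},\overline s_i^\perp]$ and every vehicle still inside that set is moving forward in the prescribed way. For the collision constraints \eqref{eq:safety-1}--\eqref{eq:safety-3}, the clearing phase coincides with $\mathbf X$ (collision-free by hypothesis), the braking phase is rear-end safe by \cref{prop:multcanstop}, and crossing conflicts are respected because the $\pi_{ij}^p$ come from a feasible solution and, by definition of the no-stop region $\mathcal D_i$, once $s_i\geq\overline s_i^\perp$ vehicle $i$ has passed the entrance of every conflict region relevant to it, so pushing it forward and then stopping it cannot force a violation of \eqref{eq:safety-1}--\eqref{eq:safety-3}. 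Finally, each vehicle of $\NN_{t_{\kappa+1}}\setminus\NN_{t_\kappa}$ is at curvilinear position $0$; by the safe-entry-with-margin-$\tau$ hypothesis (as in \cref{thm:feasible}) its arrival does not invalidate the controls of $\NN_{t_\kappa}$, and it can be assigned a stop-and-clear trajectory of the same two-case type, which again fits in the horizon $K_{rec}\tau$. This yields the desired feasible solution of \fh{K_{rec}} at $t_{\kappa+1}$.

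\textbf{Main obstacle.} The delicate part is the braking phase: reconciling the line-stopping construction of \cref{prop:multcanstop} with the no-stop constraints, and verifying that executing all the per-vehicle clearing-then-stopping maneuvers simultaneously stays collision-free for crossing conflicts as well as rear-end ones. This is where the exact bookkeeping of the $\tau$ and ceiling terms inside $T_{stop}$ matters, and where the structural role of the no-stop region --- a vehicle inside $\mathcal D_i$ never has to yield --- is used essentially. The horizon shift, the verification of the remaining constraints, and the treatment of newly entering vehicles are routine given \cref{lem:canstop}, \cref{prop:multcanstop} and \cref{thm:feasible}.
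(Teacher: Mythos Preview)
Your approach differs from the paper's in a way that introduces a genuine gap. The paper does not attempt to bring every vehicle to rest within the shifted horizon; instead it performs a \emph{one-step extension} of $\mathbf X$ from $[\kappa,\kappa+K_{rec}]$ to $[\kappa,\kappa+K_{rec}+1]$ and then restricts. Its case split is on the position $s_i(T_{stop})$ reached \emph{under $\mathbf X$}: if $s_i(T_{stop})\leq s_i^{acc}$, then \cref{prop:multcanstop} (whose output satisfies $\hat s_i\leq s_i$) stops the line before $s_i^{acc}$ and keeps it stopped; otherwise constraints \eqref{eq:cstr-accel}--\eqref{eq:cstr-deadlock} already force $s_i(T_{rec})\geq\overline s_i^\perp$ under $\mathbf X$, so the crossing constraints are satisfied through step $K_{rec}$ by $\mathbf X$ as given, and only a single additional step needs to be appended. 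The horizon decomposes as $T_{stop}$ (time for the vehicle to be either stopped or past $s_i^{acc}$) followed by $\frac{v_{min}}{u_a}+\frac{d}{v_{min}}+\tau$ (time to clear, given it is already past $s_i^{acc}$).

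Your clear-then-stop construction reverses the roles of the two time budgets, and this does not fit in $K_{rec}\tau$. In your case 2 you bound the clearing time by $\frac{v_{min}}{u_a}+\frac{d}{v_{min}}+\tau$, but that is only the time spent \emph{inside} $[s_i^{acc},\overline s_i^\perp]$. A vehicle that cannot brake before $s_i^{acc}$ may nonetheless start well before it --- e.g.\ at $(s_i^{acc}-L,\,v_{max})$ with $L$ just below $\tfrac{v_{max}^2}{2|u_b|}$ --- and under an $\mathbf X$ that brakes hard before $s_i^{acc}$ it takes up to about $\tfrac{v_{max}}{|u_b|}$ extra time just to reach $s_i^{acc}$, during which \eqref{eq:cstr-accel}--\eqref{eq:cstr-deadlock} impose nothing. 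Your ``worst case'' of starting at rest at $s_i^{acc}$ omits this approach phase; including it, clear-then-stop can require roughly $\tfrac{v_{max}}{|u_b|}+\tfrac{v_{min}}{u_a}+\tfrac{d}{v_{min}}+\tau+T_{stop}>T_{rec}$. The remedy is exactly the paper's: do not try to halt a case-2 vehicle within the horizon, but use the fact that $\mathbf X$ already drives it past $\overline s_i^\perp$ by step $K_{rec}$ and extend by one step.
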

\begin{proof}[Sketch of proof]
The idea between the choice of $T_{rec}$ is to ensure that each vehicle can either stop safely before entering its acceleration region (without generating rear-end collisions), or has already planned to exit its no-stop region safely. Moreover, the safe entering hypothesis ensures that the entry of new vehicles does not invalidate previously safe solutions, which can therefore be extended.
\end{proof}

We obtain the equivalence between \ih{} and \fh{K}:
\begin{theorem}\label{thm:equivalence}
Problems \ih{} and \fh{K} with $K\tau \geq T_{rec}$ are equivalent, \textit{i.e.} an optimal solution to one is also an optimal solution to the other.
\end{theorem}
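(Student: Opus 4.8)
The plan is to prove the equivalence by establishing two inclusions between the feasible sets, and then arguing that the optimal objective values coincide. Since both problems share the same objective function $J^\kappa$ (which depends only on the first-step controls $u_i^\kappa$) and \fh{K} has only a subset of the constraints of \ih{}, \Cref{prop:ih-to-fh} already gives one direction: any feasible (in particular optimal) solution of \ih{} restricts to a feasible solution of \fh{K} with the same objective value, so the optimal value of \fh{K} is at most that of \ih{}. The real content is the converse: given an optimal solution $\mathbf X$ of \fh{K} with $K\tau \geq T_{rec}$, I must extend it to a feasible solution of \ih{} achieving the same objective value. Because the objective depends only on the first time step, any such extension automatically has the same cost, so it suffices to exhibit \emph{some} admissible infinite-horizon completion of the finite trajectory.

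First I would take the solution of \fh{K} and apply \Cref{prop:recursive}: since $K\tau \geq T_{rec} \geq T_{stop} + \frac{v_{min}}{u_a} + \frac{d}{v_{min}} + \tau$, the recursive feasibility argument shows that from the state reached under $\mathbf X$, one can always produce a feasible solution of \fh{K} one step later. Iterating this, I get an infinite sequence of overlapping length-$K$ feasible trajectories; the standard receding-horizon splicing argument then stitches the first step of each into a single infinite-horizon control that satisfies \eqref{eq:cstr-first}--\eqref{eq:cstr-last} at every step $k \geq \kappa$, hence lies in $\mathcal U^{safe}_{\tau,\kappa}(t_\kappa)$. Crucially, the first step of this spliced control equals the first step of the original $\mathbf X$, so $J^\kappa$ is preserved. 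This shows the optimal value of \ih{} is at most that of \fh{K}, and combined with the easy direction the two optimal values are equal; moreover the explicit construction turns an optimal solution of one problem into an optimal solution of the other.

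The main obstacle I anticipate is making the iterated application of \Cref{prop:recursive} fully rigorous: \Cref{prop:recursive} is stated under ``the hypotheses of \cref{thm:feasible}'' and the safe-entry-with-margin-$\tau$ assumption, so I need to check that these hypotheses propagate along the entire infinite sequence of time steps — i.e., that the state reached after splicing one more step still satisfies the safe-state and safe-entry preconditions needed to invoke the proposition again. This is essentially an induction, and the inductive step is exactly what recursive feasibility provides, but one must be careful that the bound $T_{rec}$ was chosen uniformly (it depends only on the global constants $u_a, u_b, u_{max}, v_{max}, v_{min}, d$ and the line-length bound $p$, all of which are time-independent under the standing assumptions), so the same $K$ works at every step. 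A secondary subtlety is verifying that the spliced infinite control is genuinely piecewise-constant with step $\tau$ and respects the per-vehicle admissibility boxes $[\underline u_i, \overline u_i]$ at every step — this follows because each finite solution already does, and splicing only selects first steps — and that newly entering vehicles are correctly incorporated, which is handled by the safe-entry hypothesis exactly as in the proof of \Cref{thm:feasible}. Once the induction is set up cleanly, the equivalence of optimal solutions follows immediately from the fact that $J^\kappa$ sees only the first control step.
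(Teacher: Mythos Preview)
Your proposal is correct and follows essentially the same approach as the paper: one direction via \Cref{prop:ih-to-fh}, the other by iterating \Cref{prop:recursive} to extend a finite-horizon solution to an infinite-horizon one, then concluding from the fact that $J^\kappa$ depends only on the first-step control. You spell out the receding-horizon splicing and the uniformity of $T_{rec}$ more carefully than the paper does, but the logical structure is identical.
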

\begin{proof}
\Cref{prop:ih-to-fh} ensures that any optimal solution to \ih{} is a feasible solution of \fh{K}. \Cref{prop:recursive} shows that a solution to \fh{K} (with $K\tau \geq T_{rec}$) can be recursively extended to a solution of \ih{}; therefore, the optimal solution of \fh{K} is feasible for \ih{}. Using these two results, we deduce the stated theorem.
\end{proof}

An important corollary of \cref{thm:ih-miqp,thm:deadlock,thm:equivalence} is that the control obtained by solving \fh{K} with $K$ large enough is also a solution to the minimally deviating supervision problem, and ensures deadlock avoidance as well. Contrary to \ih{}, \fh{K} is relatively easy to solve with dedicated mixed-integer quadratic programming solvers, as will be demonstrated in the following section.

\section{Simulation Results}\label{sec:simulation}
\subsection{Simulation environment}
The presented Supervisor framework has been validated using extensive computer simulations on various test scenarios. In the absence of standardized test situations and since no open-sourced implementation of comparable methods~\cite{Colombo2014,Campos2015} is available, this section does not aim at a quantitative comparison with existing algorithms. Since our Supervisor is by design guaranteed to output an optimal\footnote{Among the set of piecewise-constant controls with a given time step duration and in the sense of \Cref{def:minimally-deviating}} safe control, the major evaluation criterion is rather its ability to handle a wider variety of traffic scenarios than existing techniques, which is demonstrated in the rest of this section.

Due to implementation reasons, the resolution of the supervision problem is performed off-line and simulations are run in two successive phases. In the first phase, we define the geometry of the roads inside the supervision area and the corresponding possible paths, and compute the collision and acceleration regions information for each pair of paths. Since these sets only depend on the geometry of vehicles and paths, the corresponding parameters are computed off-line.

In the second phase, we run the simulation by coupling the high-fidelity vehicle physics simulator PreScan~\cite{prescan} with an external Python implementation of our supervisor. The actual resolution uses the commercial MIQP solver GUROBI~\cite{gurobi}; the Python program runs a coarse simulation over a set time horizon with a fixed time step duration. Vehicles are generated using random Poisson arrivals, with a predefined arrival rate for each possible path, while respecting the safe entering condition; the initial velocity of each generated vehicle is chosen randomly according to a truncated Gaussian distribution. At each time step, the finite horizon supervision problem \fh{} is solved for the vehicles inside the supervision area, and yields the best safe control for the set of vehicles. The state of these vehicles at the next time step is then computed according to equations~\eqref{eq:cstr-first} and~\eqref{eq:cstr-dyn}.

In parallel, we use PreScan to validate the consistency of this output: from the safe controls computed in the Python supervisor and knowing the reference paths of the vehicles, we compute a target state comprising a desired position, heading and longitudinal velocity for each vehicle. This target state is fed into a low-level controller which outputs a steering and an acceleration or braking control. The vehicle model used in the validation phase takes into account engine response as well as chassis and suspensions dynamics, but does not model road-tire friction. PreScan's collision detection and visualization capacities are then used to validate the absence of collision or dangerous situations. Note that vehicles controllers are designed to ensure a bounded positioning error for any vehicle, relative to their prescribed path and velocity profile. This error is taken into account in the computation of the collision regions, so that the system is robust to control imperfections.

\subsection{Test scenarios}
In the rest of this section, we consider three test scenarios -- chosen to represent a wide variety of driving situations -- consisting of merging on a highway, crossing an intersection or driving inside a roundabout. To showcase the performance of our framework in avoiding accidents and deadlocks, we assume that drivers are ``oblivious'' and focused on tracking a desired speed, regardless of the presence of other vehicles. A video of the presented simulations is available online\footnote{Available at \url{https://youtu.be/JJZKfHMUeCI}}.

\subsubsection{Highway merging}
We first consider a very simple highway merging scenario, where an entry lane merges into a single-lane road; the possible paths for the vehicles are the same as in~\cref{fig:collision-region-c}. The collision region between a vehicle $i$ in the entry lane and a vehicle $j$ on the highway have a single connected component given as $\underline s_{ij}^\perp = \underline s_{ji}^\perp = \SI{89}{\metre}$ and $s_{ij}^\parallel =  s_{ji}^\parallel = \SI{94}{\metre}$, taking control errors into account.

\begin{figure}
	\subfloat[Longitudinal positions; the shaded area is the collision region.]{\includegtikz[width=\columnwidth,height=3.5cm]{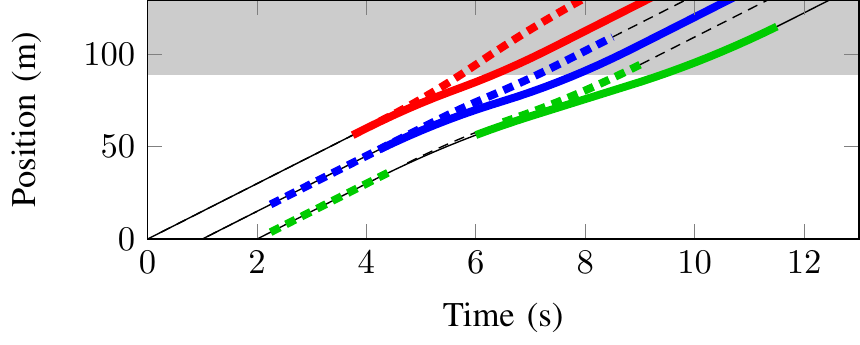}}
	
	\subfloat[Longitudinal velocities]{\includegtikz[width=\columnwidth,height=3.5cm]{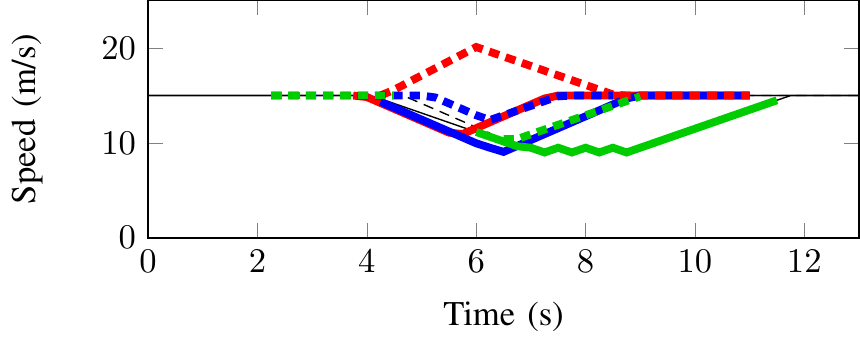}}
	\caption{Vehicles positions and velocities in the merging scenario; solid lines correspond to vehicles on the entry lane, dashed lines to vehicles starting on the highway. The thick colored portions show overriding intervals.\label{fig:merge-sim}}
\end{figure}

To illustrate the action of the supervisor, we consider a set of six vehicles, three of which are on the highway and three on the entry lane. All vehicles are assumed to have ``oblivious'' drivers maintaining a constant speed, thus resulting in potential collisions. This admittedly unrealistic behavior has been chosen to generate a higher probability of collisions in absence of supervision. \Cref{fig:merge-sim} shows the longitudinal trajectories of the supervised vehicles; colored (thick) portions of the lines represent intervals of time during which overriding occurs. The area in gray corresponds to the collision region between entering vehicles and vehicles on the highway; thanks to the action of the supervisor, all collisions are successfully avoided.

\subsubsection{Intersection crossing}
\begin{figure}
	\subfloat[Longitudinal positions; the shaded area is the collision region.]{\includegtikz[width=\columnwidth,height=3.5cm]{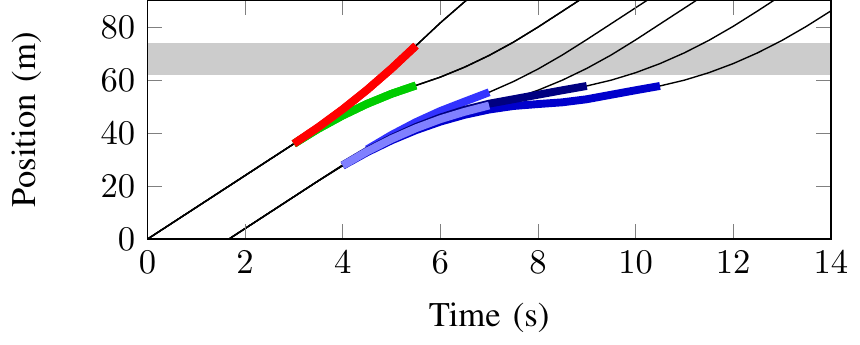}}
	
	\subfloat[Longitudinal velocities]{\includegtikz[width=\columnwidth,height=3.5cm]{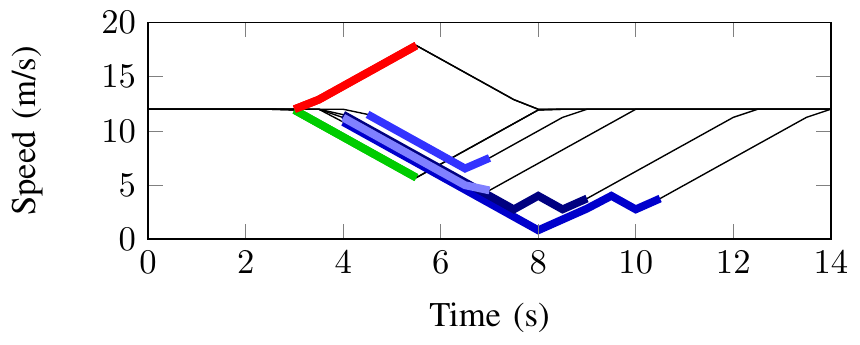}}
	\caption{Vehicles positions and velocities in the intersection crossing scenario; solid lines correspond to vehicles on the entry lane, dashed lines to vehicles starting on the highway. The thick colored portions show overriding intervals.\label{fig:intersec-sim}}
\end{figure}
The second scenario is the crossing of a $+$ shaped intersection by a total of eight vehicles, with two vehicles per branch. In each branch, the front vehicle goes straight, and the rear vehicle turns left; moreover, all vehicles in front start at the same distance from the center of the intersection, and the same is true for the vehicles in the rear. This scenario illustrates the symmetry-breaking capacities of our framework, which handles this perfectly symmetrical scenario well, as shown in~\cref{fig:intersec-sim}. The area in gray corresponds to the collision region between vehicles on different branches. A video of a longer, one hour simulation is available also online\footnote{\url{https://youtu.be/cl32nbceZvw}}.

\subsubsection{Roundabout driving}
\begin{figure}\centering
	\includegtikz[height=5cm]{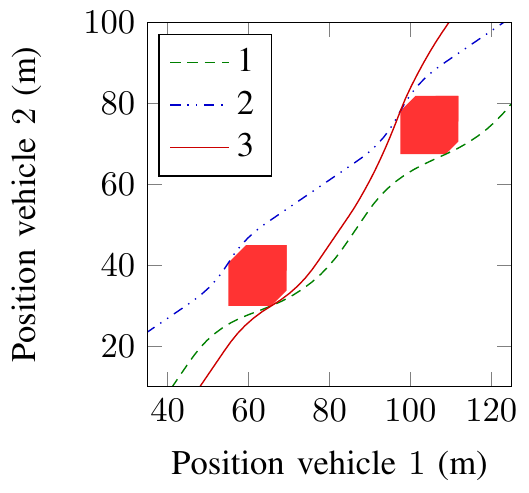}
	\caption{Illustration of three possible classes of trajectories found by the solver, depending on the initial states of the vehicles. Trajectories $1$ and $2$ correspond to one vehicle passing the two collision points before the other. Trajectory $3$ corresponds to the case where the vehicle on the inner lane enters after the other, and overtakes it inside the roundabout. \label{fig:roundabout-sim}}
\end{figure}
Finally, the third scenario consists of vehicles driving inside a two-lanes roundabout. The particularity of this situation is that collision regions can have multiple connected components, for instance for the paths shown in~\cref{fig:collision-region-b}. Since our formulation explicitly distinguishes each of these connected components, the supervisor is able to choose an ordering for each point of conflict, as illustrated in \cref{fig:roundabout-sim}: depending on the initial states and control targets of the vehicles, a different class of solution is chosen. A video of a longer, one hour simulation is also available online\footnote{\url{https://youtu.be/pLoG32wFnkE}}.

\subsubsection{Computation time}
Due to the relatively short time horizon needed to ascertain infinite horizon safety, computation time remains reasonable despite the NP-hardness of the MIQP formulation. \Cref{fig:comput-time} shows the evolution of the computation time in the intersection crossing and roundabout scenarios; the limited available space in the merging scenario does not allow enough vehicles for a similar diagram. These measurements have been obtained on a computer equipped with an Intel Core i7-6700K CPU clocked at \SI{4}{\giga \hertz} with \SI{16}{\giga \byte} of RAM, using the GUROBI solver in version 7.0. It can be seen that computation time remains below the duration of a time step in $90\%$ of cases for up to approximately ten simultaneous vehicles, thus allowing real-time computation at \SI{2}{\hertz}. 

Note that the MIQP problem only loosely depends on the paths geometry, but rather on the average number of conflicts per vehicle which is higher in the case of roundabout driving, thus explaining the longer times reported in~\cref{fig:comput-time-roundabout}. Moreover, the implemented algorithm has been devised for readability over efficiency, and can be optimized by removing redundant variables to further reduce computation time. In practice, this refresh rate means that vehicles could apply a new acceleration every \SI{0.5}{\second}, which is faster than the typical reaction time of one second for a human driver, and should therefore be barely perceived. Note that for practical implementation purposes, the input of the supervisor should be predicted states at the end of the computation period instead of current states; since the acceleration of each vehicle is assumed to be known to the supervisor, these predictions can be easily performed by forward integration.

\begin{figure}\centering
	\subfloat[Intersection scenario]{\includegtikz[width=\linewidth,height=4cm]{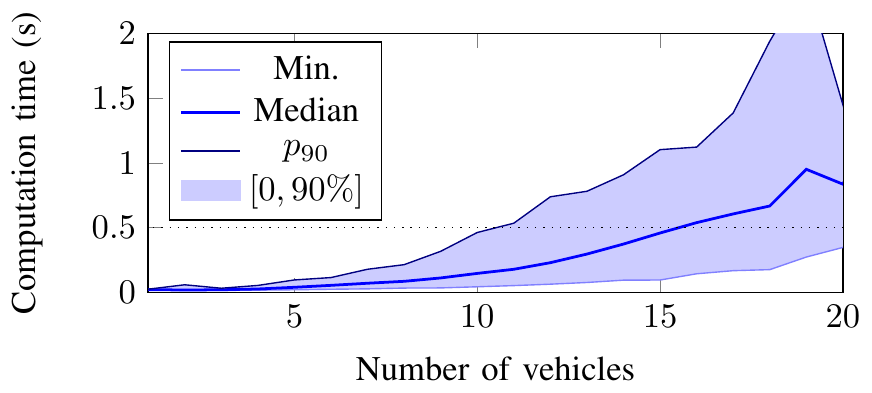}}

	\subfloat[Roundabout scenario]{\includegtikz[width=.9\linewidth,height=4cm]{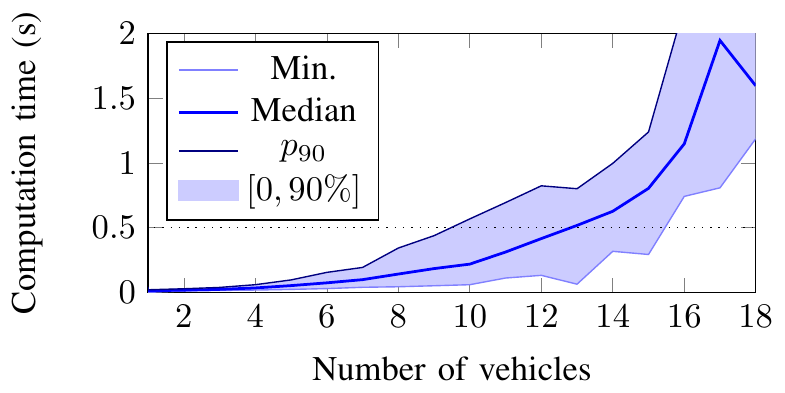}\label{fig:comput-time-roundabout}}
	\caption{Distribution of computation times depending on the number of vehicles, for $\tau = $ \SI{0.5}{\second}. Shaded areas represent the $[0, 90\%]$ percentiles. \label{fig:comput-time}}
\end{figure}

\section{Discussion on implementation}\label{sec:discussion}
In the previous sections, we presented an optimization-based algorithm for the supervision of semi-autonomous vehicles; we now briefly discuss obstacles and possible solutions for actual implementation. First and foremost, not all vehicles will be equipped with the required communication capacities at the same time; therefore, the ability to deal with unequipped vehicles and other traffic participants is key to envision actual applications. Second, this work assumes perfect communication and control, and in general ignores uncertainties arising from real-world constraints.

\subsection{Dealing with unequipped vehicles}
As with all innovations, the penetration rate of our system would gradually increase overtime, but remain below \SI{100}{\percent} for years, yet the formulation proposed in \Cref{sec:miqp-supervised} requires all vehicles to be equipped with supervision capacities. Although a detailed study on the integration of unequipped vehicles in our framework is out of the scope of this paper, we present a possible technique to handle these vehicles provided that they can avoid longitudinal collisions with the leading vehicle, and have a bounded reaction time.

First, note that it is always possible to consider unequipped vehicles conservatively as proposed in~\cite{Schildbach2016}: at a given step $k$, we compute the minimum and maximum curvilinear position that can be reached at time $t_k$ by the unequipped vehicle $i_u$, denoted by $s_{i_u,min}^k$ and $s_{i_u,max}^k$ respectively. Using the same notations as in \Cref{sec:miqp-supervised}, we then define:
\begin{align}
\varepsilon_{{i_u}j,p}^\parallel(k) = &\ \chi_{[s_{{i_u}j,p}^\parallel,+\infty[}(s_{i_u,max}^k), \\
\varepsilon_{{i_u}j,p}^\perp(k) = &\ \chi_{[\overline s_{{i_u}j,p}^\perp,+\infty[}(s_{i_u,min}^k).
\end{align}
Therefore, the unequipped vehicle is considered as occupying the conflict region at step $k$ when there exists a control (maximum acceleration) for which it could be inside this region at step $k$. Similarly, the vehicle is only considered as liberating the conflict region when, even by applying a maximum braking, it would exit it. The collision avoidance constraints \eqref{eq:safety-2} and \eqref{eq:safety-3} are also modified to use $s_{i_u,min}^{k+1}$ and $v_{i_u,min}^{k+1}$, where $v_{i_u,min}^{k+1}$ is the minimum speed reachable by $i_u$ at $k+1$. Other traffic participants such as cyclists (and, to a lesser extent, pedestrians) could also be taken into account in this fashion. Recently proposed ``non-conservatively defensive strategies''~\cite{Zhan2016} could also be applied.

A limitation of this simple approach is that it can lead equipped vehicles to often yield right-of-way to unequipped vehicles, which may be problematic and can slow the acceptance of the system. A possible method (introduced in~\cite{Qian2014}) to reduce this problem while improving the global level of safety is to use the existing equipped vehicles to force the unequipped ones to stop when required. Suppose that an unequipped vehicle (denoted by $i_u$) follows an equipped one ($i_e$), both crossing the path of another equipped vehicle $j_e$. By setting $\pi_{{i_e}{j_e}} = 0$ (thus requiring $j_e$ to pass before $i_e$), we effectively force the unequipped vehicle $i_u$ to also pass after $j_e$; the reaction time of the unequipped vehicle can be taken into account by adjusting the lower bound on the longitudinal acceleration of vehicle $i_e$.

Note that this approach still guarantees that no collision can happen between an unequipped and an equipped vehicle; moreover, as the penetration rate of equipped vehicles increases, additional rules may be enforced to reduce the number of occurrences in which conflicting unequipped vehicles are simultaneously allowed in the conflict region, thus increasing safety even for the unequipped vehicles. Future work will study the impact of penetration rate on safety and efficiency for both equipped and unequipped vehicles.

\subsection{Practical implementation}
We propose a centralized implementation, where a roadside computer (\textit{supervisor}) with communication capacities is added to the infrastructure, and is tasked with repeatedly solving \fh{K}. Note that resolution could also be performed using cloud computing, possibly providing much faster computations without necessitating fully dedicated hardware. The supervisor is also assumed to be equipped with a set of sensors (\textit{e.g.}, cameras), so that the arrival of new vehicles in the supervision area can be monitored (in order to account for unequipped vehicles and other traffic participants). Equipped vehicles are supposed to regularly communicate their current state, including position, velocity and driver's control input, and receive instructions (the safe acceleration sequence $(u_i^k)$ solution of \fh{K}) from the roadside supervisor. The vehicle's on-board computer then uses these instructions to override the driver's control inputs when needed. We argue that the main sources of uncertainty, \textit{i.e.} communication, sensing and control errors, can be taken into account by using safety margins when computing collision regions.

Communications are assumed to have similar performance to current 802.11p specifications; we use the figures provided in~\cite{Jiang2006,Demmel2012} as reference, with latency below \SI{20}{\milli\second}, and packet loss probability of less than \SI{30}{\percent} under \SI{300}{\meter}. To account for network congestion, we use more conservative values than those reported experimentally in~\cite{Demmel2012}. Moreover, using the additional roadside sensors, we estimate that uncertainty in each vehicle's localization could be reduced to below \SI{1}{\meter} longitudinally.

First, the \SI{20}{\milli\second} latency corresponds to less than \SI{1}{\meter} at highway speed. Second, since they do not require exchanging a lot of data, such messages can be sent much more frequently than the refresh rate of the supervisor. Considering messages can be sent at \SI{20}{\hertz}, the probability of a message not being received in \SI{0.25}{\second} is roughly \SI{0.2}{\percent}, and \SI{6e-6}{} after \SI{0.5}{\second}. Since they receive a whole sequence of safe accelerations, individual vehicles can keep executing this sequence until a new one is successfully received. A worst-case scenario would be having one vehicle using acceleration $u_a$ (maximum acceleration) where it should have used $u_b$ (maximum braking): after a duration $t$, the corresponding positioning error is $\frac 1 2 t^2 (u_a + |u_b|)$, which is roughly \SI{30}{\centi\meter} after \SI{0.25}{\second} and \SI{1.3}{\meter} after \SI{0.5}{\second} for typical values of $u_a$ and $|u_b|$ of \SI{5}{\meter \per \second\squared}. More robust contingency protocols could likely be developed, and will be the subject of future work, but these values can be used as safety margins without compromising performance.

Similarly, positioning and control uncertainty can be accounted for as margins in the collision regions, provided they can be bounded. In this work, we assume that vehicle self-positioning can be improved using the roadside sensors from the supervisor (which can be precisely calibrated), which could provide relatively tight bounds on error.

\section{Conclusion}\label{sec:conclusion}
In this article, we designed a framework allowing safe semi-autonomous driving of multiple cooperative vehicles in various traffic situations. We first introduce a set of linear constraints ensuring infinite horizon safety for a group of human-driven vehicles, traveling inside predefined corridors with the help of existing lane-keeping technologies. Based on this set of constraints, a discrete-time \textit{Supervisor} is allowed to override the drivers' longitudinal control inputs if they would lead the vehicles into an inevitable collision state. In this case, the control used for overriding is chosen as close as possible to the one originally requested by the drivers. These two properties ensure that intervention only occurs when strictly necessary to maintain safety, thus facilitating the acceptation of the system by human drivers.

Theoretical considerations prove this supervisor guarantees both safety and deadlock avoidance, and can be applied without distinction to multiple situations such as traffic intersection, highway entry lanes or roundabouts. Using the realistic vehicle physics simulator PreScan, we demonstrated that our algorithm can handle complex situations over an arbitrary duration, with continuous arrivals of vehicles. Moreover, the proposed formulation can be solved in real-time on a standard desktop computer for up to ten vehicles, which makes it suitable for practical applications.

Additionally, this work opens up many perspectives for future research. First and foremost, the current framework does not deal with non-equipped vehicles or other traffic participants such as cyclists or pedestrians, nor does it take sensor and communication uncertainties into account. Before considering an actual implementation, the system should be more robust to these various sources of noise. Moreover, our formulation has been designed in a mostly centralized fashion; various approaches need to be explored to design a more realistic decentralized system, that could be implemented in actual cars.


%

\ifapp
\appendices
\crefalias{section}{appsec}
\crefalias{subsection}{appsec}

\section{Demonstrations}
\subsection{Proofs for \Cref{sec:miqp-supervised}}\label{app:proof-deadlock}
Before proving \cref{thm:deadlock}, we introduce the following lemma stemming from graph theory:
\begin{lemma}\label{lem:graph}
Let $\mathcal G = (V, E)$ a directed graph with vertices set $V$ and edges set $E$. All cycles in $\mathcal G$ can be removed by reversing a set of edges, each of them contributing to at least one cycle.
\end{lemma}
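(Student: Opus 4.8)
The plan is to exploit the decomposition of $\mathcal{G}$ into strongly connected components (SCCs). The starting observation is that an edge $(u,v)\in E$ lies on a directed cycle of $\mathcal{G}$ if and only if $u$ and $v$ belong to the same SCC: if they do, there is a directed path from $v$ back to $u$, which closes a cycle through $(u,v)$; conversely every directed cycle is contained in a single SCC. Hence any edge joining two distinct SCCs is irrelevant both for the cycles we must destroy and for the requirement that each reversed edge lie on a cycle, so I will leave all inter-SCC edges untouched and work component by component.

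Concretely, I would fix a topological order $C_1,\dots,C_m$ of the condensation DAG, so that every inter-SCC edge runs from some $C_\ell$ to some $C_{\ell'}$ with $\ell<\ell'$, and, inside each $C_\ell$, fix an arbitrary linear order $\prec_\ell$ of its vertices, writing $r_\ell(v)$ for the rank of $v$ in $\prec_\ell$. Let $F$ be the set of ``backward'' intra-SCC edges, i.e.\ those $(u,v)$ with $u,v$ in the same $C_\ell$ and $v\prec_\ell u$, and let $\mathcal{G}'$ be obtained from $\mathcal{G}$ by reversing exactly the edges of $F$. The second requirement of the lemma is then immediate: each edge of $F$ has both endpoints in one SCC, hence lies on a directed cycle of $\mathcal{G}$.

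It remains to show $\mathcal{G}'$ is acyclic, which I would do with a potential function: give $v\in C_\ell$ the value $\Phi(v)=(\ell,r_\ell(v))$, ordered lexicographically. Along an inter-SCC edge (unchanged in $\mathcal{G}'$) the first coordinate strictly increases by the choice of topological order; along an intra-SCC edge of $\mathcal{G}'$ — whether an original forward edge or a reversed backward edge — the first coordinate is constant and the second strictly increases, since reversing a backward edge turns it into an edge from the $\prec_\ell$-smaller to the $\prec_\ell$-larger endpoint. Thus $\Phi$ strictly increases along every edge of $\mathcal{G}'$, so $\mathcal{G}'$ has no directed cycle.

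The one point needing care — and the step I expect to be the main obstacle to phrase cleanly — is exactly that reversing edges inside individual SCCs cannot create a new cycle spanning several components: this is handled by noting that such reversals keep each affected edge inside its own SCC and hence leave the condensation (already a DAG) and the order $C_1,\dots,C_m$ valid, which is precisely what makes $\Phi$ work. (If $\mathcal{G}$ had loops they would be cycles that reversal cannot remove; as in our setting, we assume $\mathcal{G}$ is loopless.)
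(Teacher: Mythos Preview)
Your proof is correct and more self-contained than the paper's. The paper instead invokes \emph{minimum feedback arc sets}: it takes a minimum $E_{\mathrm{feedback}}\subset E$ whose removal makes $\mathcal G$ acyclic, observes that minimality forces every edge of $E_{\mathrm{feedback}}$ to lie on some cycle (otherwise it could be dropped), and then asserts that reversing $E_{\mathrm{feedback}}$ also yields an acyclic graph. That last step, which the paper leaves as ``it can be seen'', is typically justified by noting that in a topological order of $(V,E\setminus E_{\mathrm{feedback}})$ every edge of a \emph{minimum} FAS must point backward (else it could be removed from the FAS), so reversal sends all of them forward.

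Your route avoids the appeal to FAS minimality altogether: by working SCC-by-SCC with an arbitrary internal order and the potential $\Phi$, you obtain both required properties --- each reversed edge lies on a cycle, and the reversed graph is acyclic --- directly from first principles. The cost is that your set $F$ need not be a minimum feedback arc set, but the lemma does not ask for minimality, so nothing is lost. In effect both arguments end up producing a global linear order and reversing exactly the backward edges; the paper extracts that order from a minimum FAS, while you build it explicitly from the condensation.
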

\begin{proof}
The proof is based on the existence of minimum feedback arc sets~\cite{Younger1963}, \textit{i.e.} a minimum set $E_{feedback} \subset E$ such that $\mathcal G' = (V, E \setminus E_{feedback})$ is acyclic. By minimality of $E_{feedback}$, any $e \in E_{feedback}$ belongs to at least one cycle of $\mathcal G$. Moreover, it can be seen that reversing the edges of $E_{feedback}$ also leads to an acyclic graph, thus proving the lemma.
\end{proof}

\begin{proof}[Proof of \Cref{thm:deadlock}]
	Note that the only constraints requiring a vehicle to stop are~\eqref{eq:safety-1} to~\eqref{eq:safety-3}, forcing a vehicle $j$ to wait for a vehicle $i$ with $\pi_{ij}^p = 1$. From the hypotheses and \cref{thm:feasible}, there exists a solution $\mathbf X$ to~\ih{} at time $t_{\kappa_0}$. We define a directed \textit{priority graph} $\mathcal G_{\mathbf X} = (V,E)$ with $V = \NN_{t_{\kappa_0}}$ and where an edge $i \rightarrow j$ belongs to $E$ if there exists $p$ such that $\pi_{ij}^p = 1$. Using this representation, a cycle in $\mathcal G_{\mathbf X}$ corresponds to a chain of $q$ conflicting vehicles $i_1, i_2, \dots, i_q, i_{q+1} = i_1$ for which there exists a connected component $\mathcal C_{i_n i_{n+1}}^{p_n}$ such that $\pi_{i_ni_{n+1}}^{p_n} = 1$ for all $n = 1\dots q$. 

	If $\mathcal G_{\mathbf X}$ is acyclic, it defines a (partial) topological order, and it is always possible to admit the vehicles one by one in that order. Therefore, there exists a feasible solution where all the vehicles of $\NN_{t_{\kappa_0}}$ exit the supervision area in finite time.

	We now assume that there exists at least one cycle in $\mathcal G_{\mathbf X}$. If all the vehicles involved in the cycle can exit in finite time, the result of the theorem is proven. Otherwise, we note $\NN_{dead} \subset \NN_{t_{\kappa_0}}$ a set of vehicles corresponding to a cycle in $\mathcal G_{\mathbf X}$: all of these vehicles are stopped at infinity, and are prevented to move further by a constraint of form~\eqref{eq:safety-1}, for a certain $j \in \NN_{dead}$. Moreover, the no-stop condition~\eqref{eq:cstr-deadlock} ensures that, for all $i \in \NN_{dead}$ and all $k \geq \kappa$, $s_i^k \leq \underline s_i^\perp$. 

	From \cref{lem:graph}, we know that it is possible to change the values of the variables $\pi_{ij}^p$ for $i, j \in \NN_{dead}$ to render $\mathcal G_{\mathbf X}$ acyclic. Using the fact that $s_i^k \leq \underline s_i^\perp$ for all of these vehicles, we know that modifying these priorities does not violate constraints~\eqref{eq:safety-1} to~\eqref{eq:safety-3}. Therefore, we can build a solution $\mathbf X'$ for which the corresponding priority graph is acyclic, which proves the theorem.
\end{proof}

\subsection{Proofs for \Cref{sec:finite-horizon}}\label{app:proofs-fh}
\begin{proof}[Proof of \Cref{lem:canstop}]The proof is trivial if we consider continuous-time dynamics, as a single vehicle can always apply a control lower or equal to $u_b$ starting from time $t_k + \tau$, which ensures it is stopped for $t \geq t_k + \tau + \frac{v_{max}}{u_b}$. A slight additional complexity happens at the final braking time step when considering piecewise-constant controls, applying $u_b$ for a duration $\tau$ might result in a negative velocity, which is not allowed in our framework. We now proceed to the formal proof, as below.
	
Let $(u_i^k)$ be the control corresponding to trajectory $s_i$, and let us define a control $(w_i^k)$ as: $w_i^{\kappa} = u_i^{\kappa}$, $w_i^k = \min(u_{b}, u_i^k)$ for $\kappa < k < \kappa + K$, and $w_i^{\kappa+K} = u_{b}$. We construct $(\tilde u_i^k)$ iteratively as $\tilde u_i^{\kappa} = u_i^k$ and, for $k \geq \kappa+1$, $\tilde u_i^k = \left\{ \begin{array}{l l}w_i^k & \mathrm{if\ } \tilde v_i^k + w_i^k\tau \geq 0\\ -\frac{v_i^k}{\tau} & \mathrm{otherwise}\end{array}\right.$, where $\tilde v_i^k$ is the speed of vehicle $i$ at time $t_k$ under control $(\tilde u_i^k)$.

As $\tilde v_i^{\kappa+1} = v_i^{\kappa+1} \leq v_{max} \leq (K-1)\tau |u_{b}|$ from the hypothesis, there exists a minimal value of $k_0 \geq \kappa$ such that $\tilde v_i^{k_0} \leq |u_{b}| \tau$; moreover, the condition on $K$ ensures that $\tilde v_i^{\kappa + 1} - (K-2)|u_b|\Delta_t \leq |u_b|\Delta_t$, and so $k_0 \leq (\kappa + 1) + (K - 2) = \kappa + K - 1$. 

From the definition of $(\tilde u_i^k)$, we know that for all $k_0 + 1 \leq k \leq \kappa + K$, $\tilde v_i^{k} = 0$. Since $\tilde u_i^k \leq u_i^k$ for $\kappa \leq k \leq k_0-1$, we also know that $\tilde s_i^{k_0} \leq s_i^{k_0}$ and $\tilde v_i^{k_0} \leq v_i^{k_0}$. Finally, $\tilde u_i^{k_0}$ is the minimal admissible control starting from $\tilde x_i^{k_0}$; therefore, $\tilde s_i^{k_0+1} \leq s_i^{k_0+1} = s_i^{\kappa+K}$ which proves the above lemma.
\end{proof}

\begin{proof}[Proof of \Cref{prop:multcanstop}] We first consider the continuous-time case to give an intuition of the proof. We build upon the fact that the rearmost vehicle in a line can always brake with acceleration $u_b$ until it fully stops. However, some the initial conditions may require vehicles in front to accelerate in order to avoid collisions, for instance if the rearmost vehicle is too fast. However, even in this case, we know that the second rearmost vehicle can brake with $u_b$ as soon as it has matched the speed of the rearmost vehicle, and by induction this is true for all the vehicles in the line. In the continuous-time case, all of these vehicles can therefore stop in a time bounded by $\frac{v_{max}}{|u_b|}$.
	
When considering piecewise-continuous controls, an additional complexity arises from the fact that vehicles may match speed between two time steps, resulting in an ``overshoot'' in velocity. We use the hypotheses on the acceleration to bound this overshoot, as illustrated in \cref{fig:overshoot}: we consider a fast vehicle (noted $1$, blue curve) following a slower vehicle (noted $2$, red curve). To avoid collisions, vehicle $2$ is required to accelerate; vehicles match speed at time $t_{2\rightarrow 1}$; however, due to the time discretization, the overshoot phenomenon can occur. Using the bounds on the acceleration, noting $k_{2\rightarrow 1}$ the time step immediately following $t_{2\rightarrow 1}$, we know that $v_2^k \leq v_1^k + \tau(u_{max} + |u_b|)$. Moreover, after step $k$, vehicle $2$ can brake with a control $u_b$ up to time $t_1$, corresponding to the first integer time step $k_1$ when $v_1^{k_1} \leq \tau |u_b|$. Since both vehicles $1$ and $2$ have the same acceleration of $[t_{2\rightarrow 1}, t_1]$, we know that $v_2^{k_1} \leq \tau |u_b| + \tau (|u_b| + u_{max})$. Therefore, noting $k_2 = k_1 + 1 + \ceil{\frac{u_{max}}{|u_b|}}$, we know that $v_2^{k_2} \leq \tau |u_b|$. The same reasoning can then be repeated for the vehicles preceding vehicle $2$. We will now formalize this recursion, as below.

\begin{figure}
	\includegtikz[width=\linewidth,height=4cm]{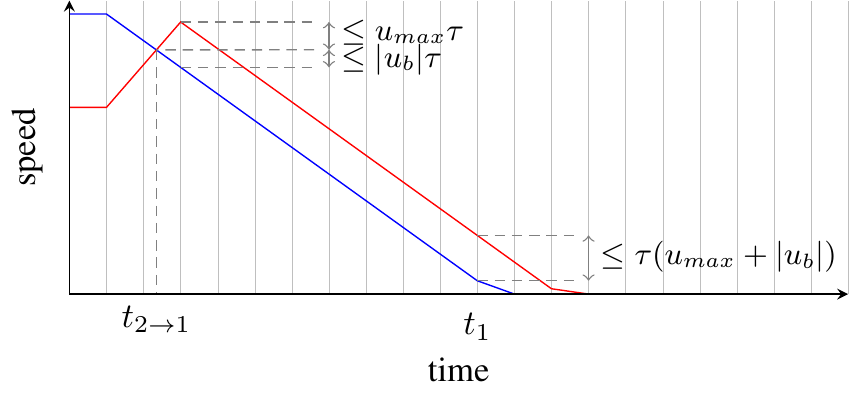}
	\caption{Illustration of the ``overshoot'' phenomenon; the vertical grid correspond to integer multiples of the time step.\label{fig:overshoot}}
\end{figure}
	
We will prove by induction that, for $i \in \{1,\dots,p\}$, there exists a dynamically feasible control $(\hat u_i^k)$ with $\hat u_i^\kappa = u_i^\kappa$ and $\hat u_i^k \leq u_i^k$ for $k \geq \kappa$ such that the corresponding vehicle speed $(\hat v_i^k)$ verifies $\hat v_i^{\kappa + K_i} \leq |u_b|\tau$ with $K_i \tau = \ceil{\frac{v_{max}}{|u_b|}} + (i-1) \left(1+\ceil{\frac{u_{max}}{|u_b|}}\right) \tau$. First, for the rearmost vehicle $i=1$, the proof of \cref{lem:canstop} provides the result with $\hat u_i^k = \tilde u_i^k$.

We now let $i \geq 2$ and assume that every vehicle $j \in \{1, \dots, i-1\}$ follows its corresponding control $(\hat u_j^k)$. We note $\hat s_j^k$ and $\hat v_j^k$ the position and speed of vehicle $j$ at step $k$ under this control. Since $\hat u_j^k \leq u_j^k$ for these vehicles, we deduce from the monotony of the system that the original control solution for vehicle $i$ $(u_i^k)_{\kappa \leq k < \kappa + K}$ prevents rear-end collisions if vehicle $i-1$ applies $\hat (u_{i-1}^k)$. Therefore, any dynamically feasible extension of $(u_i^k)$ is safe over $[\kappa \tau, (\kappa + K + 1) \tau[$. As a result, the set $U_i^{safe}(u_i^{\kappa},[\kappa,\kappa+K])$ of all admissible controls $(\hat u_i^k)_{\kappa \leq k \leq \kappa+K}$ for vehicle $i$ such that $\hat u_i^{\kappa} = u_i^{\kappa}$ and $\underline u_i^k \leq \hat u_i^k \leq u_i^k$ for $\kappa \leq k < \kappa+K$ is not empty. We note $(\hat u_i^k)$ a minimum element of this set (and so $\hat u_i^k \leq u_i^k$); we will prove that $\hat v_i^{\kappa+ K_i} \leq |u_b|\tau$.

If for all $k \geq \kappa$, $\hat v_i^k \leq \tilde v_{i-1}^k$, we conclude that vehicle $i$ stops before vehicle $i-1$ which proves the result from the induction hypothesis. Otherwise, we let $k_0^i \geq \kappa$ be the minimum $k$ such that $\hat v_i^k \geq \tilde v_{i-1}^k$, and we know that $\hat v_i^{k_0^i} \leq \hat v_{i-1}^{k_0^i-1} + u_{max} \tau$. For all $k \geq k_0^i$, we know from the monotony of the system that the control $\min(\hat u_{i-1}^k, u_i^k, u_b)$ prevents rear-end collisions; we deduce that, for $k \geq k_0^i$, $\hat v_i^k \leq \hat v_{i-1}^{k-1} + u_{max} \tau$. Therefore, $\hat v_i^{\kappa + K_{i-1} + 1} \leq \hat v_{i-1}^{K_{i-1}} + u_{max} \tau$ and we deduce from the induction hypothesis that $\hat v_i^{\kappa + K_{i-1}+1} \leq u_{max} \tau + |u_b| \tau$. Therefore, we obtain the recursion relation $K_i = K_{i-1} + 1 + \ceil{\frac{u_{max}}{|u_b|}}$ which yields the announced result.

Finally, we conclude that vehicle $i$ can fully stop (without rear-end collisions) at step $\kappa + K_i + 1$; therefore the set of $p$ vehicles can safely stop before the beginning of step $\kappa + K$ if $K\geq K_p = \frac{v_{max}}{|u_b|\tau} + (p-1) \ceil{\frac{u_{max}}{|u_b|}} + 1$. Since the recursion ensures that for all $i$ and $k$, $\hat u_i^k \leq u_i^k$, we deduce that $s_i^{\kappa + K} \leq \hat s_i^{\kappa + K}$ which proves the proposition.

Note that the time needed for vehicles to match speeds can also be bounded by $\frac{v_{max}}{u_a}$ regardless of the value of $p$. Therefore, all vehicles can also fully stop within the time horizon if $T = K\tau \geq \frac{v_{max}}{|u_b|} + \frac{v_{max}}{u_a} + 2\tau$. Depending on the value of $p$, this bound may be better than the previously demonstrated one.
\end{proof}

\begin{proof}[Proof of \Cref{prop:recursive}]
	We consider a time $t_\kappa = \kappa \tau$, and we let $T_{rec} = K_{rec}\tau \geq T_{stop} + \frac{v_{min}}{u_a} + \frac{d}{v_{min}} + \tau$. Consider a solution $\mathbf X$ of \fh{K_{rec}} for the vehicles of $\NN_{t_\kappa}$, defined for steps $\kappa \leq k \leq \kappa + K$. We will first show that this solution can be extended to a solution of \fh{K+1} for the vehicles in $\NN_{t_\kappa}$. Note that the only constraints which can be unfeasible are the safety constraints~\eqref{eq:safety-1}-\eqref{eq:safety-3} and the minimum velocity constraints~\eqref{eq:cstr-accel} and~\eqref{eq:cstr-deadlock}. Consider a vehicle $i \in \NN_{t_\kappa}$: using the control corresponding to this solution, two cases can arise:
	\begin{itemize}
	\item $s_i(T_{stop}) \leq s_i^{acc}$, in which case \cref{prop:multcanstop} ensures that $i$ and all the vehicles behind it can fully stop before reaching $s_i^{acc}$, and can remain stopped up to step $K_{rec} + 1$. Since we also require that $s_j^{acc} \geq s_i^{acc}$ if $j$ follows $i$, this ensures that keeping $i$ and its followers stopped satisfies all the above constraints;
	\item otherwise, $s_i(T_{rec}) \geq \overline s_i^\perp$, in which case the crossing and minimum velocity constraints~\eqref{eq:safety-1},~\eqref{eq:cstr-accel} and~\eqref{eq:cstr-deadlock} are satisfied for all conflicting vehicle $j$ up to step $K_{rec}$. The requirement $T_{rec} \geq T_{stop}$ and \cref{prop:multcanstop} ensure that the safe following constraints~\eqref{eq:safety-2} and~\eqref{eq:safety-3} involving vehicle $i$ remain satisfiable for the vehicles of $\NN_{t_\kappa}$ up to step $K_{rec} + 1$.
	\end{itemize}
	Indeed, if $\underline s_i^\perp \geq s_i(T_{stop}) > s_i^{acc}$, condition~\eqref{eq:cstr-accel} ensures that vehicle $i$ accelerates at least with acceleration $u_a$ until reaching speed $v_{min}$, which takes at most a time $\frac{v_{min}}{u_a}$. The vehicle is then required to maintain speed $v_{min}$ until reaching $\overline s_i^\perp$, which takes at most a time $\frac{d}{v_{min}}$. Therefore, vehicle $i$ necessarily reaches $\overline s_i^\perp$ by time $T_{rec}$; the additional $\tau$ accounts for vehicles reaching doing so between two time steps.

	The above considerations ensure that the solution of \fh{K} at time $t_\kappa$ can be prolongated to a solution of \fh{K+1} for the vehicles of $\NN_{t_\kappa}$. Finally, noting that the safe entry hypothesis ensures that this solution remains safe even when taking the vehicles of $\NN_{t_\kappa + \tau} \setminus \NN_{t_\kappa}$ into consideration. By definition, there also exists a safe control (and therefore a solution to \fh{K}) for these vehicles at time $t_\kappa + \tau$. As a result, there exists a solution to \fh{K} at time $t_\kappa + \tau$ for the vehicles of $\NN_{t_\kappa + \tau}$ which proves the stated result.
\end{proof}
\fi

\ifCLASSOPTIONcaptionsoff
  \newpage
\fi



\bibliographystyle{IEEEtran}
\bibliography{MILP_Supervisor}

%

\begin{IEEEbiography}[{\includegraphics[width=1in,height=1.25in,clip,keepaspectratio]{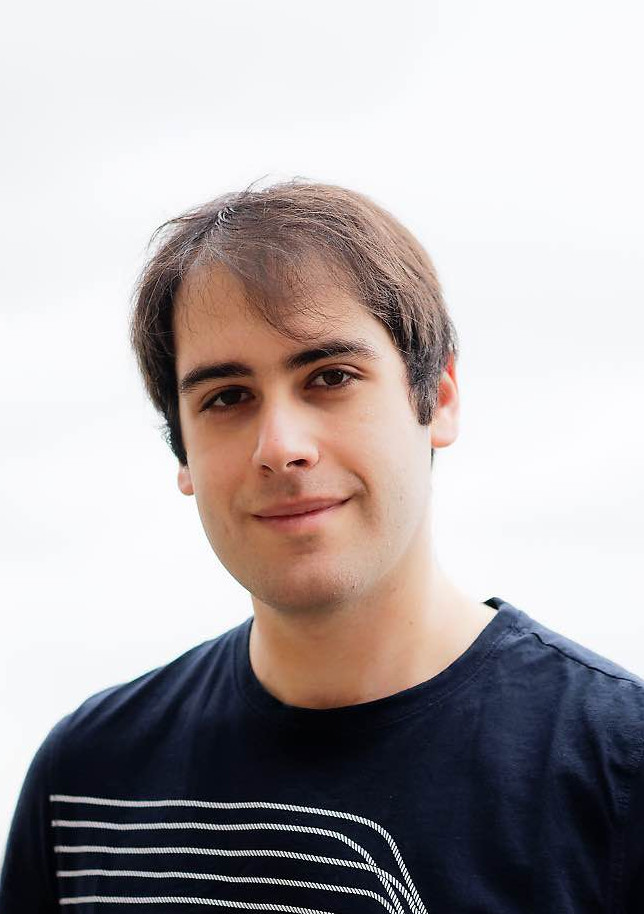}}]{Florent Altch\'e}
received the M.S. degree in engineering from French \'Ecole Polytechnique in 2014 and a Specialized Master degree of public policies from \'Ecole des Ponts ParisTech in 2015. He is currently working towards a Ph.D. degree at the Centre for Robotics at Mines ParisTech. His research interests include coordination and cooperation of autonomous vehicles or robots, as well as planning in an uncertain environment.
\end{IEEEbiography}

\begin{IEEEbiography}[{\includegraphics[width=1in,height=1.25in,clip,keepaspectratio]{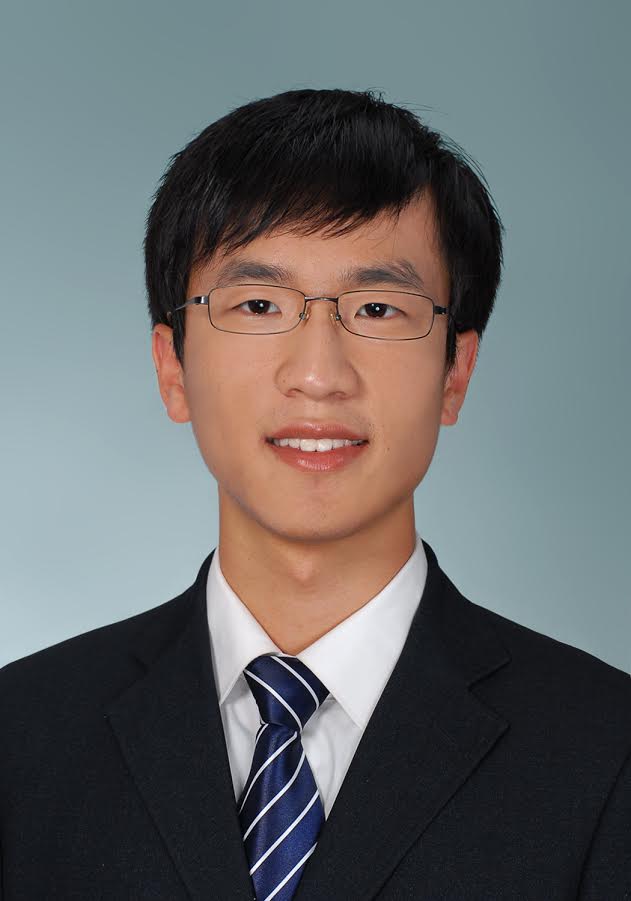}}]{Xiangjun Qian}
received the B.S. degree in computer science from Shanghai Jiao Tong University, Shanghai, China in 2010, the Dip-Ing degree from MINES ParisTech, Paris, France, in 2012, and received his Ph.D. degree in robotics and automation at MINES ParisTech in 2016. His main research interest lies in the control and coordination of autonomous vehicles. He is also interested in the application of machine-learning techniques on the analysis of large-scale transportation networks.
\end{IEEEbiography}


\begin{IEEEbiography}[{\includegraphics[width=1in,height=1.25in,clip,keepaspectratio]{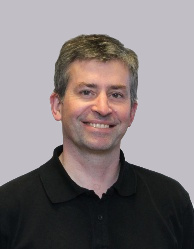}}]{Arnaud de La Fortelle}
(M’10) received the M.S. degree in engineering from \'Ecole Polytechnique and \'Ecole des Ponts et Chaussées, Paris, France, in 1994 and 1997 respectively and the Ph.D. degree from \'Ecole des Ponts et Chaussées in 2000, with a specialization in applied mathematics. He is professor at MINES ParisTech, Paris, France.

From 2003 to 2005, he investigated communications for cooperative systems and the architecture required in distributed systems at INRIA, participating in the CyberCars project. Since 2006, he has been the director of the Joint Research Unit LaRA (La Route Automatis\'ee) of INRIA and MINES ParisTech and, since 2008, has also served as the Director of the Center of Robotics in MINES ParisTech. 
His research interests include cooperative systems (communication, data distribution, control, and mathematical certification) and their applications (Autonomous vehicles, collective taxis...). He coordinates the international research chair \textit{Drive for All} (with partners UC Berkeley, Shanghai JiaoTong University and EPFL).

Dr. de La Fortelle has been elected to the Board of Governors of IEEE Intelligent Transportation System Society in 2009 and is member of the Board of the French Automotive Engineers Society. He is President of the French ANR evaluation committee for sustainable transport and mobility since 2015 and served as expert in European H2020 program.
\end{IEEEbiography}




\end{document}